\definecolor{darkblue}{rgb}{0.1,0.1,0.7}
\definecolor{darkred}{rgb}{0.9,0.1,0.1}
\title{Maximum Spectral Measures of Risk\vspace{0.2cm}\\with given Risk Factor Marginal Distributions}
\newtheorem{proposition}{Proposition} 
\newtheorem{lemma}{Lemma} 
\newtheorem{theorem}{Theorem} 
\newtheorem{definition}{Definition}
\newtheorem{remark}{Remark}
\newtheorem{corollary}{Corollary}
\newtheorem{example}{Example}
\newcommand{\ve}{\varepsilon}
\newcommand{\cR}{\mathcal{R}}
\newcommand{\cX}{\mathcal{X}}
\newcommand{\cL}{\mathcal{L}}
\newcommand{\cF}{\mathcal{F}}
\newcommand{\cY}{\mathcal{Y}}
\newcommand{\cW}{\mathcal{W}}
\newcommand{\cP}{\mathcal{P}}
\newcommand{\cM}{\mathcal{M}}
\newcommand{\cV}{\mathcal{V}}
\newcommand{\bbR}{\mathbb{R}}
\renewcommand{\(}{\left(}
\renewcommand{\)}{\right)}
\renewcommand{\tilde}{\widetilde}
\DeclareMathOperator*{\argmin}{argmin}
\author{Mario Ghossoub\thanks{Department of Statistics and Actuarial Science, University of Waterloo, {\color{blue}mario.ghossoub@uwaterloo.ca}}
\and Jesse Hall \thanks{Department of Statistics and Actuarial Science, University of Waterloo, {\color{blue}jshall@uwaterloo.ca}}
\and David Saunders \thanks{Corresponding author. Department of Statistics and Actuarial Science, University of Waterloo, {\color{blue}dsaunders@uwaterloo.ca}\vspace{0.2cm}}}
\begin{document}
\maketitle

\begin{abstract}
We consider the problem of determining an upper bound for the value of a spectral risk measure of a loss that is a general nonlinear function of two factors whose marginal distributions are known, but whose joint distribution is unknown. The factors may take values in complete separable metric spaces. We introduce the notion of Maximum Spectral Measure (MSP), as a worst-case spectral risk measure of the loss with respect to the dependence between the factors. The MSP admits a formulation as a solution to an optimization problem that has the same constraint set as the optimal transport problem, but with a more general objective function. We present results analogous to the Kantorovich duality, and we investigate the continuity properties of the optimal value function and optimal solution set with respect to perturbation of the marginal distributions. Additionally, we provide an asymptotic result characterizing the limiting distribution of the optimal value function when the factor distributions are simulated from finite sample spaces. The special case of Expected Shortfall and the resulting Maximum Expected Shortfall is also examined.
\end{abstract}

\vspace{1cm}

\noindent\thanks{\textit{Keywords and phrases:} Spectral risk measure, Expected shortfall, Dependence uncertainty, Optimal transport, Monge-Kantorovich duality.\vspace{0.5cm}}

\noindent\thanks{\textit{JEL Classification:} C02, C61, G21, G22. \vspace{0.5cm} }

\noindent\thanks{\textit{2010 Mathematics Subject Classification:} 91G70, 91G60, 91G40, 62P05.}

\makeatletter{\renewcommand*{\@makefnmark}{}\footnotetext{We are grateful to Ruodu Wang for comments and suggestions.}\makeatother}

\newpage


\section{Introduction}

A risk measure is a mapping $R$ from a space $\cR$ of random variables on some probability space $(\Omega,\cF,\pi)$ into real numbers, where $\cR$ is interpreted as a collection of discounted net financial positions. Risk measures are often used by financial entities in their risk management process for determining adequate capital requirements that would act as a safety net against adverse deviations from expectations. This determination may be part of the financial entity's enterprise risk management strategy or may be imposed by the regulator. A coherent risk measure (\citet{ArtznerDelbaenEberHeath,Delbaen2002}) is a risk measure that satisfies the following axioms, deemed desirable for effective regulations and management of risk. 
\begin{enumerate}
\renewcommand{\labelenumi}{\textbf{\theenumi}}
\renewcommand{\theenumi}{R.\arabic{enumi}}
\item (Monotonicity) $R(X) \leq R(Y)$, for all $X,Y \in \cR$ such that $X \leq Y$, $\pi$-a.s. 
\item (Positive Homogeneity) $R(\lambda X) = \lambda R(X)$, for all $X \in \cR$ and all $\lambda\in\mathbb{R}_{+}$.
\item (Cash Invariance) $R(X+c) = R(X) + c$, for all $X \in \cR$ and $c\in\mathbb{R}$.
\item (Subadditivity) $R(X+Y) \leq R(X) + R(Y)$ for all $X,Y \in \cR$.
\end{enumerate}

\vspace{0.2cm}

\noindent The most commonly used coherent risk measure in practice is Expected Shortfall (ES), also known as Conditional Value-at-Risk (CVaR). ES is frequently used in quantitative risk management in the banking and insurance industries, forming the basis for the market risk charge in the Basel Framework~(\citetalias{BaselMarketRiskCharge}). For a continuous loss random variable, ES is simply the expected loss given that losses exceed a prescribed quantile. 

\vspace{0.2cm}

A risk measure $R$ is referred to as law-invariant if $R(X)=R(Y)$ when $X$ and $Y$ have the same distribution, and comonotonic additive if $R(X+Y) = R(X)+R(Y)$ when $X$ and $Y$ are comonotonic\footnote{$X,Y$ are said to be comonotonic if for all $\omega,\omega^{\prime}\in \Omega$, $ [X(\omega) - X(\omega^{\prime})][Y(\omega) - Y(\omega^{\prime})] \geq 0 $.}. If the probability space is atomless, then any coherent and comonotonic additive law-invariant risk measure on $\cL^{\infty}(\pi)$ can be represented as a spectral risk measure (\citet{Kusuoka}, \cite{Shapiro2013}, \citet[Theorem 4.93]{FollmerSchied}) of the form 
 \begin{equation*} 
 R(X) = \int_{0}^{1} \mbox{ES}_{u}(X)\, d\Gamma(u),
 \end{equation*}  
for a probability measure $\Gamma$ on $[0,1]$ (see Section~\ref{ProbForm} below). ES at confidence level $\alpha \in (0,1)$ is the most common example of a spectral risk measure, corresponding to $\Gamma = \delta_{\alpha}$, the Dirac delta measure at $\alpha$.

\smallskip

\subsection*{This Paper's Contribution}
We consider the problem of determining an upper bound on the spectral risk measure of a loss that is a nonlinear function of two factors whose marginal distributions are known, but whose joint distribution is unknown; thereby providing a worst-case risk measure for that loss, with respect to the dependence structure between these factors. The original motivation for the formulation of the problem arose from work in counterparty credit risk, in which the two factors correspond to sets of factors for market risk and credit risk (e.g., \citet{RosenSaundersAlpha} and \citet{RosenSaundersST,RosenSaundersCVA}). Bounding the Credit Valuation Adjustment (CVA) (i.e., the ``price" of counterparty credit risk losses) given known market and credit risk factor distributions assumes the form of an optimal transport problem. \citet{GlassermanYang} studied the bounds produced by this problem, and a version of the problem with an entropic penalty function, as well as the convergence of these problems to the true CVA bounding problem, using simulated data. \citet{MemartoluieSaundersWirjanto} considered in a formal way the problem of bounding ES given the distributions of the market and credit factors, and showed that in the case of finite sample spaces, the problem is equivalent to a linear program. Further information on both the CVA and ES bounding problems is contained in~\citet{MemartoluieThesis}.

\vspace{0.2cm}
In this paper, we consider in a more general setting the problem of bounding any spectral risk measure of a loss $L(X,Y)$ given that the marginal distributions of the factors $X$ and $Y$ are known. The factors may take values in complete separable metric spaces $\cX$ and $\cY$ (e.g., spaces of curves or paths, rather than simply vectors in $\mathbb{R}^{n}$). We introduce the Maximum Spectral Measure (MSP), and the special case thereof of Maximum Expected Shortfall (MES) of the loss $L(X,Y)$, as the maximum value that the spectral risk measure $R\left(L(X,Y)\right)$ (resp.\ the ES of $L$) takes over the collection of all probability measures $\pi$ on $\cX \times \cY$ with prescribed marginals $\mu$ on $\cX$ and $\nu$ on $\cY$ for the risk factors $X$ and $Y$. The resulting optimization problem has the same constraint set as the optimal transport problem, but its objective function is more general. We present results analogous to the Kantorovich (strong) duality, and investigate the continuity properties of the optimal value function and optimal solution set with respect to perturbation of the marginal distributions. An asymptotic result characterizing the limiting distribution of the optimal value function in the special case of ES when the factor distributions are simulated from finite sample spaces is also provided. 

\vspace{0.2cm}

The setting, analysis, and results of this paper can be easily extended to a problem of bounding a spectral risk measure of a loss $L: \cX_1 \times \cdots \times \cX_n \to \mathbb{R}$, where $\cX_i$ is an abstract space and $\mu_i$ is a given marginal distribution on $\cX_i$, for each $i \in \{1, \cdots,n\}$. The resulting problem of finding the worst-case spectral risk measure of $L$ over the collection of all joint distributions $\pi$ on $\prod_{i=1}^n \cX_i$ would have the same constraint set as the multi-marginal optimal transport problem (e.g., \citet{RachevRuschendorf}), but its objective function is more general. Using similar techniques as the ones used in this paper would provide a strong duality theory in that case analogous to the Kantorovich duality in multi-marginal optimal transport. Hereafter, we limit the analysis to the case where $n=2$ and $\cX_1,\cX_2$ are complete separable metric spaces, as this is enough to convey the main ideas behind our results.

\smallskip
 
\subsection*{Related Literature}
This paper contributes to the large literature on the optimal transport problem and its applications (e.g., \citet{RachevRuschendorf} or \citet{VillaniTopicsInOT,VillaniOTOldAndNew}, and the references therein). Applications to economics are discussed in \citet{Galichon,Galichon2017} and to risk measures in~\citet{RuschendorfMathematicalRiskAnalysis}. Optimal transport, and the related martingale optimal transport problem have also been applied to the problem of calculating bounds for prices of financial instruments (e.g., \cite{Galichonetal2014}, \citet{BeiglbockLaborderePenkner}, or \citet{Labordere}, and the references therein).

\vspace{0.2cm}

This paper also contributes to the literature on dependence uncertainty bounds for risk measures of the (aggregate loss) function $L$ (e.g., \citet{McNeilFreyEmbrechts} or \citet{RuschendorfMathematicalRiskAnalysis}, and the references therein). In this literature, one is interested in upper and lower bounds on a risk measure of $L(X_1,\cdots,X_n)$, based on the dependence structure between the individual risk factors $X_i$. Specific assumptions on the shape of the function $L$ are typically made\footnote{For instance, sums or stop-loss functions of several functions (e.g., \citet{McNeilFreyEmbrechts}).}, and the risk factors $X_i$ are typically $\mathbb{R}$-valued. See, for instance, \citet{Embrechtsetal2015} (and references therein) for dependence uncertainty bounds for VaR and ES, in a setting where the loss function is the sum of the individual $\mathbb{R}$-valued risk factors. This paper contributes to this literature by considering general aggregate loss functions $L$, a spectral risk measure for $L$, as well as individual risk factors $X_i$ that are not necessarily $\mathbb{R}$-valued. For the case $n=2$ that we treat explicitly in this paper, both risk factors take values in general complete separable metric spaces. As mentioned above, the analysis can be extended to the case $n \geq 3$.

\vspace{0.2cm}

In practical computations, the distributions of the risk factors will often have to be simulated, and a finite dimensional approximation to the original problem, based on the empirical measures, be considered. Important questions then arise regarding the convergence of the approximating problems, and the distribution of the error in the approximate solution. A wide literature on the analogous problem for optimal transport exists, particularly in the case of power cost functions (empirical Wasserstein distances and their convergence properties). For example, the asymptotic rate of convergence to zero of the distance between a sample from the uniform distribution on the cube and the uniform distribution itself is dimension dependent (e.g., \citet{AjtaiKomlosTusnady}, \citet{AmbrosioStraTrevisan}, or \citet{TalagrandULB}, and the references therein). Results for Gaussian samples have recently been studied by~\citet{TalagrandGaussian}, \citet{LedouxOne,LedouxTwo}, and \citet{LedouxThree}. Quantitative estimates on empirical Wasserstein distances are presented in \citet{FournierGuillin}. A central limit type result on the distribution of the error term when the underlying sample spaces are finite was presented in \citet{SommerfeldMunk}, using the approach based on Hadamard differentiability and the delta method that we follow below (see also \citet{KlattTamelingMunk} and \citet{TamelingSommerfeldMunk}). For a central limit type result for a quadratic cost in a more general setting, see \citet{DelBarrioLoubes}, as well as \citet{DelBarrioGordalizaLoubes} for a more general cost in the one-dimensional setting. Comprehensive studies of the case of one space dimension are given in \citet{DelBarrioGineMatran} and \citet{BobkovLedoux}. Estimation of risk measures based on finite samples has also been studied extensively. See, for example \citet{PflugWozabal}, \citet{BelomestnyKratschmer}, \citet{BeutnerZahle}, \citet{KratschmerSchiedZahleTail}, \citet{KratschmerSchiedZahleRobust}, and \citet{KratschmerSchiedZahleQuasiHad}. The closely related topic of the sensitivity of risk measures to changes in the underlying probability measure was investigated by \citet{PichlerThesis,PichlerDifferentProbMeasures}.

\vspace{0.2cm}

The remainder of this paper is structured as follows. Section~\ref{ProbForm} presents a mathematical formulation of the problem of determining the upper bound for a spectral risk measure of a possibly nonlinear function of $X,Y$ given the specification of their marginal distributions, thereby introducing the notion of a MSP. Section~\ref{DualitySpect} presents results analogous to the Kantorovich duality of optimal transport. Section~\ref{StabilitySpect} discusses continuity properties of the optimal value and optimal solution mapping with respect to perturbations of the marginal distributions of the factors $X$ and $Y$. Section~\ref{ESsection} discusses the special case of ES and the resulting MES in further detail. Section \ref{AsymptoticDistributionSection} presents a result on the asymptotic distribution of the optimal value of the ES bounding problem when the distributions of $X$ and $Y$ are simulated from finite sample spaces, and also presents some numerical results for when continuous random variables are simulated. Section~\ref{Conclusion} concludes and suggests several directions for future research.

\smallskip

\section{Spectral Risk Measures and Problem Formulation}
\label{ProbForm}

\subsection{Notation}
Throughout, we assume that $\cX$ and $\cY$ are two Polish (i.e., complete, separable, metric) spaces, and we set $\cW := \cX \times \cY$. When defining measures and random variables, all the spaces are equipped with their respective Borel sigma-algebras. For a Polish space $\cV$, $v\in \cV$, and $\delta > 0$, we denote by $B_{\delta}(v)$ the open ball of radius $\delta$ with centre $v$. Let $\cM(\cV)$  (resp., $\cM_{+}(\cV)$) be the set of finite (resp.\ finite and nonnegative) measures on $\cV$, and let $\cP(\cV)$ be the set of all probability measures on $\cV$. 

\bigskip

Consider a measurable space $\(\Omega, \mathcal{G}\)$. We are given random variables $X: \Omega \to \cX$ and $Y: \Omega \to \cY$, with respective laws $\mu\in\cP(\cX)$ and $\nu\in\cP(\cY)$. We interpret $X$ and $Y$ as exogenously given risk factors, or sets of risk factors. Let $\Pi(\mu,\nu)$ denote the set of all $\pi\in \cP(\cX \times \cY)=\cP(\cW)$ with marginal distributions $\mu$ and $\nu$. For sequences $\{\pi_{n}\}_{n \in \mathbb{N}}\subset\cP(\cV)$, $\pi_{n}\to\pi$ denotes weak convergence (narrow convergence) of probability measures. That is,
$$\pi_{n}\to\pi \ \ \Longleftrightarrow \ \ \int f d\pi_{n}\to \int f d\pi, \ \forall f\in C_{b}(\cV),$$
where $C_{b}(\cV)$ denotes the space of bounded continuous functions from $\cV$ to $\mathbb{R}$, which we note is metrizable using (for example) the Prokhorov metric\footnote{See, e.g., \citet[Chap.\ 11]{DudleyRAP} or \citet[Chap.\ 3]{EthierKurtz}.} on $\cP(\cV)$ defined by 
\begin{equation*} 
d_{\cP}(P,Q) := \inf_{m\in\Pi(P,Q)} \inf\{ \ve > 0: \; m[(x,y): d_{\cV}(x,y) \geq \ve ] \leq \ve\}.
\end{equation*} 

\bigskip

For a loss random variable $L:\cW\to\mathbb{R}$ and a probability measure $\pi\in\cP(\cW)$, we define the generalized inverse cumulative distribution function of $L$ (with respect to $\pi$) as
 \begin{equation*} 
F^{\leftarrow}_{L,\pi}(\alpha) := \inf\{ x\in \mathbb{R} : \pi(L\leq x) \geq \alpha\}, \ \forall \alpha \in (0,1).
\end{equation*} 
$F^{\leftarrow}_{L,\pi}(\alpha)$ is often also called the 
Value-at-Risk of $L$ (under the measure $\pi$) at the confidence level $\alpha$, and denoted by $\mbox{VaR}_{\alpha,\pi}(L)$.

\smallskip
\subsection{Spectral Risk Measures}
A spectral risk measure of the loss random variable $L:\cW\to\mathbb{R}$ is defined as
\begin{equation*} \label{SpectralRiskMeasureVaRDef}
R_{\pi,\sigma}(L) := \int_{0}^{1} F^{\leftarrow}_{L,\pi}(u) \sigma(u)\, du = \int_{0}^{1}\mbox{VaR}_{L,\pi}(u)\sigma(u)\, du,
\end{equation*}
where the spectral function $\sigma: [0,1)\to \mathbb{R}_{+}$ is nonnegative, increasing, and satisfies $\int_{0}^{1}\sigma(u)\, du = 1$. We will also assume that $\sigma$ is right-continuous. For an atomless probability space $(\Omega,\cF,\pi)$, any coherent, comonotonic additive, and law-invariant risk measure on $\cL^{\infty}(\pi)$ can be represented as a spectral risk measure (e.g., \citet{Kusuoka}, \cite{Shapiro2013}, \citet[Theorem 4.93]{FollmerSchied}). 

\bigskip

\noindent
{\bf We assume throughout that we are given a fixed and {\em bounded} spectral function $\sigma$.}

\bigskip
A particular spectral risk measure of significant importance in practice (e.g., \citet{McNeilFreyEmbrechts}, \citet{BaselMarketRiskCharge}) is Expected Shortfall (ES), defined by $\sigma(u) := (1-\alpha)^{-1}\mathbf{1}_{[\alpha,1]}(u)$, that is,
\begin{equation*} 
\mbox{ES}_{\alpha,\pi}(L) := (1-\alpha)^{-1}\int_{\alpha}^{1} F^{\leftarrow}_{L,\pi} (u) \, du.
\end{equation*} 
An important characterization of ES is due to~\citet{RockafellarUryasev} (see also~\citet[Proposition 4.51]{FollmerSchied}). Given $\pi\in\cP(\cW)$ and $\alpha \in (0,1)$, define
\begin{equation} \label{gDefinition}
g_{\alpha}(b,\pi) := b + (1-\alpha)^{-1}\mathbb{E}_{\pi}[(L-b)_{+}].
\end{equation} 
Then
\begin{equation} 
\mbox{ES}_{\alpha,\pi}(L) = \min_{b\in\mathbb{R}} \, g_{\alpha}(b,\pi).
\label{CompactESDef}
\end{equation} 
If instead $\alpha=0$, then 
\begin{equation*} 
\mbox{ES}_{0,\pi}(L) = \mathbb{E}_{\pi}[L] = \inf_{b\in\mathbb{R}} \, g_{0}(b,\pi) = \lim_{b\to -\infty} \, g_{0}(b,\pi),
\label{ESZeroAlpha}
\end{equation*} 
but the infimum may not be attained.

\bigskip
Defining $\gamma_{\sigma}$ to be the measure with distribution function $\sigma$,\footnote{That is, $\gamma_{\sigma}((a,b]) = 
\sigma(b)-\sigma(a)$. Note that $\gamma_{\sigma}$, although finite, is in general not a probability measure.} and $\Gamma_{\sigma}$ to be the probability measure with distribution function $(1-u)\sigma(u)+\int_{0}^{u}\sigma(v)\, dv$, a simple calculation shows that the spectral risk measure $R_{\pi,\sigma}$ can also be expressed as (e.g., \citet[Proposition 8.18]{McNeilFreyEmbrechts})
\begin{equation*} 
R_{\pi,\sigma}(L) = \int_{0}^{1} \mbox{ES}_{u,\pi}(L)\, d\Gamma_{\sigma}(u),
\end{equation*} 
so that $R_{\pi,\sigma}(L)$ can be expressed either as an average of VaRs or of ESs. A simple calculation using integration by parts shows that $d\Gamma_{\sigma}(u) = (1-u) d\gamma_{\sigma}(u)$. 

\bigskip
A natural domain for $R_{\pi,\sigma}$ is studied in~\citet{PichlerBanachSpaces}. Since we assume that $\sigma$ is bounded, this domain is $\cL^{1}(\pi)$ and we have that 
\begin{equation*} 
\|L\|_{1} \leq R_{\pi,\sigma}(|L|) \leq \|\sigma\|_{\infty} \cdot \|L\|_{1}
\end{equation*} 
(e.g., \citet[Theorem 11]{PichlerBanachSpaces}). In particular, the risk measure $R_{\pi,\sigma}$ is finite for all $L\in \cL^{1}(\pi)$. Furthermore, since $\tfrac{d\gamma_{\sigma}}{d\Gamma_{\sigma}} \geq 1$, we have that 
$C_{b}([0,1])\subseteq \cL^{1}(\gamma_{\sigma})\subseteq \cL^{1}(\Gamma_{\sigma})$.

\smallskip
\subsection{Maximum Spectral Measures and Dual Problems}

If we interpret $\mu$ and $\nu$ as the distributions of risk factors whose marginal distributions are known, but whose joint distribution is unknown, then the main quantity of interest to us is the maximum value of the spectral risk measure $R_{\sigma,\pi}$ of the loss $L(X,Y)$, among all measures $\pi$ with the prescribed marginals. We call this the Maximum Spectral Measure:

\begin{definition}\label{DefMSP}
For a given loss random variable $L:\cW\to\mathbb{R}$ and given marginal distributions $\mu$ on $\cX$ and $\nu$ on $\cY$, the Maximum Spectral Measure (MSP) associated with $L$ is defined as
\begin{equation} \label{RobustSpectralMeasurePrimal}
\bar R_{\sigma}(L) := \sup_{\pi\in\Pi(\mu,\nu)} R_{\pi,\sigma}(L).
\end{equation} 
\end{definition}

\noindent We interpret $\bar R_{\sigma}(L)$ as a robust spectral risk measure for $L$, in the sense that it provides a worst-case spectral risk measure for $L(X,Y)$ with respect to the dependence structure between the given risk factors $X$ and $Y$ with given respective marginals $\mu$ and $\nu$.

\bigskip
It follows from~\citet[Proposition 4.20]{FollmerSchied} that, as a supremum of suitably well-behaved risk measures, $\bar R_{\sigma}$ is itself a coherent risk measure on the set of bounded functions on $\cW$. In the notation above, we have suppressed the dependence of $\bar R_{\sigma}$ on $\mu$ and $\nu$. When we want to emphasize this dependence, we will write $V_{\sigma}(\mu,\nu) = \bar R_{\sigma}(L)$ (instead suppressing the dependence on $L$). The primal problem $V_{\sigma}(\mu,\nu)$ bears a resemblance to an optimal transport problem (it has the same feasible set). Indeed, it reduces to an optimal transport problem when $\sigma\equiv 1$. Using our notation, this problem is $\sup_{\pi\in\Pi(\mu,\nu)}\mathbb{E}_{\pi}[L(X,Y)]$, and under technical assumptions (e.g., \citet{VillaniOTOldAndNew}), the well-known Kantorovich duality holds: 
\begin{equation} \label{OTDuality}
\max_{\pi\in \Pi(\mu,\nu)}\mathbb{E}_{\pi}[L(X,Y)] = \min_{(\varphi,\psi)\in \Phi_{L}} \int \varphi\, d\mu + \int \psi \, d\nu,
\end{equation} 
where $\Phi_{L}$ is the set of all $(\varphi,\psi) \in \cL^{1}(\mu)\times \cL^{1}(\nu)$ satisfying $\varphi(x) + \psi(y) \geq L$ for $\mu$-a.e.\ $x\in \cX$ and $\nu$-a.e.\ $y\in \cY$.

\bigskip
We aim to prove a result analogous to the Kantorovich duality for Problem~\eqref{RobustSpectralMeasurePrimal}. To define a natural dual problem to the maximization problem $V_{\sigma}(\mu,\nu)$, we begin by proceeding informally. Using the representation of a spectral risk measure as an average of expected shortfalls, together with~\eqref{CompactESDef}, one would obtain 
\begin{align*} 
V_{\sigma}(\mu,\nu) &= \sup_{\pi\in\Pi(\mu,\nu)} \int_{0}^{1} \mathrm{ES}_{u,\pi}(L)\, d\Gamma_{\sigma}(u) \\
&= \sup_{\pi\in\Pi(\mu,\nu)} \int_{0}^{1} \inf_{b\in\mathbb{R}} b + (1-u)^{-1}\mathbb{E}_{\pi}[(L-b)_{+}]\, d\Gamma_{\sigma}(u) \\
&= \inf_{\beta} \int_{0}^{1} \beta(u)\, d\Gamma_{\sigma}(u) + \sup_{\pi\in\Pi(\mu,\nu)}  \mathbb{E_{\pi}}[C^{\beta}(X,Y)] 
\\
&= \inf_{\beta,\varphi(x)+\psi(y)\geq C^{\beta}(x,y)} \int_{0}^{1} \beta(u)\, d\Gamma_{\sigma}(u) + \int \varphi\, d\mu + \int \psi d\nu, 
\end{align*}
where we have defined, for a given measurable $\beta$
\begin{equation} \label{CBetaDef}
C^{\beta}(x,y) :=
\int_{0}^{1} (1-u)^{-1} \max(L(x,y)-\beta(u),0) \,d\Gamma_{\sigma}(u) = \int_{0}^{1} \max(L(x,y)-\beta(u),0)\, d\gamma_{\sigma}(u),
\end{equation} 
and we have informally assumed that we can switch the order of many pairs of operations (infimum and integral, sup and inf, repeated integrals), and that the Kantorovich duality~\eqref{OTDuality}, with $L$ replaced by $C^{\beta}$, can be applied in the final line. The goal of the next section is to make the above calculation rigorous. To that end, we let $\Phi_{C^{\beta}}$ be the set of all $(\varphi,\psi) \in \cL^{1}(\mu)\times \cL^{1}(\nu)$ satisfying $\varphi(x) + \psi(y) \geq C^{\beta}(x,y)$ for $\mu$-a.e.\ $x\in \cX$ and $\nu$-a.e.\ $y\in \cY$.

\bigskip
The dual to Problem~\eqref{RobustSpectralMeasurePrimal} is the following: 
\begin{equation*} 
D_{\sigma}(\mu,\nu) = 
\inf_{\beta\in \cL^{1}(\gamma_{\sigma}), (\varphi,\psi)\in\Phi_{C^{\beta}}} \int_{0}^{1} \beta(u)\, d\Gamma_{\sigma}(u) + \int \varphi\, d\mu + \int \psi\, d\nu.
\end{equation*} 

\smallskip
\noindent We employ the following assumptions.
\smallskip\textit{
\begin{enumerate}
\renewcommand{\labelenumi}{\textbf{\theenumi}}
\renewcommand{\theenumi}{A.\arabic{enumi}}
\item $L$ is upper semicontinuous. 
\item There exist lower semicontinuous $A \in \cL^{1}(\mu)$ and $B\in \cL^{1}(\nu)$ such that 
$L(x,y) \leq A(x) + B(y)$ for $\mu$-a.e.\ $x\in\cX$ and $\nu$-a.e.\ $y\in\cY$. 
\item There exist upper semicontinuous $a \in \cL^{1}(\mu)$ and $b\in \cL^{1}(\nu)$ such that 
$L(x,y) \geq a(x) + b(y)$ for $\mu$-a.e.\ $x\in\cX$ and $\nu$-a.e.\ $y\in\cY$.
\item  $\sigma$ is bounded and right-continuous.
\end{enumerate}
}

\smallskip

\section{Mathematical Analysis and Duality for the MSP Problem}
\label{DualitySpect}

\subsection{Preliminary Results and Weak Duality}
This section presents some preliminary results that allow us to make basic conclusions about $V_{\sigma}(\mu,\nu)$ and $D_{\sigma}(\mu,\nu)$. In particular, we derive the weak duality result that $V_{\sigma}(\mu,\nu) \leq D_{\sigma}(\mu,\nu)$. The following lemma allows us to conclude that the feasible set of the dual problem is nonempty (and thus derive weak duality), as well as providing a collection of functions $\beta$ such that standard assumptions for the Kantorovich duality apply for $C^{\beta}$ as in~\eqref{CBetaDef}.

\smallskip
\begin{lemma} \label{CBetaLemma}
Suppose that $\beta(u) \in \cL^{1}(\gamma_{\sigma})$. Then $C^{\beta}(x,y)$ is upper semi-continuous and finite-valued, and $0\leq C^{\beta}(x,y) \leq A^{\beta}(x) + B^{\beta}(y)$ with $A^{\beta}\in \cL^{1}(\mu)$, $B^{\beta}\in \cL^{1}(\nu)$.
\end{lemma}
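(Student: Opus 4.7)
The plan is to handle the three claims in the order: non-negativity, the integrable bound, and upper semi-continuity, since the first two are essentially bookkeeping but set up the dominating function that makes the third step go through.

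Non-negativity is immediate: $\max(L(x,y)-\beta(u),0)\ge 0$ pointwise and $\gamma_{\sigma}$ is a non-negative (finite) measure because $\sigma$ is increasing and bounded. For the upper bound, I would use the elementary inequality $(L(x,y)-\beta(u))_{+}\le (L(x,y))_{+}+(\beta(u))_{-}$. Combined with Assumption A.2, which gives $L(x,y)\le A(x)+B(y)$ $(\mu\otimes\nu)$-a.e., this yields $(L(x,y))_{+}\le A(x)_{+}+B(y)_{+}$ on a full-measure set. Integrating against $\gamma_{\sigma}$ and using that $\gamma_{\sigma}([0,1])=\sigma(1)-\sigma(0)<\infty$ and that $\int_{0}^{1}(\beta(u))_{-}\,d\gamma_{\sigma}(u)\le\|\beta\|_{L^{1}(\gamma_{\sigma})}<\infty$, I can define
\begin{equation*}
A^{\beta}(x):=\gamma_{\sigma}([0,1])\,A(x)_{+}+\tfrac12\int_{0}^{1}(\beta(u))_{-}\,d\gamma_{\sigma}(u),\qquad B^{\beta}(y):=\gamma_{\sigma}([0,1])\,B(y)_{+}+\tfrac12\int_{0}^{1}(\beta(u))_{-}\,d\gamma_{\sigma}(u),
\end{equation*}
which lie in $\cL^{1}(\mu)$ and $\cL^{1}(\nu)$ respectively because $A\in\cL^{1}(\mu)$ and $B\in\cL^{1}(\nu)$. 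This also shows $C^{\beta}$ is finite-valued (a.e., hence everywhere after the standard modification on a null set).

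The substantive step is upper semi-continuity. For each fixed $u$, the integrand $f_{u}(x,y):=\max(L(x,y)-\beta(u),0)$ is u.s.c.\ on $\cW$, being the maximum of two u.s.c.\ functions (the constant $0$ and $L(\cdot,\cdot)-\beta(u)$, which inherits upper semi-continuity from $L$). Fix $(x,y)\in\cW$ and a sequence $(x_{n},y_{n})\to (x,y)$. Upper semi-continuity of $L$ gives $\limsup_{n}L(x_{n},y_{n})\le L(x,y)<\infty$, so the scalar sequence $L(x_{n},y_{n})$ is eventually bounded above; hence there exists a finite constant $M=M(x,y)$ such that $L(x_{n},y_{n})\le M$ for all $n$. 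Consequently,
\begin{equation*}
0\le f_{u}(x_{n},y_{n})\le M_{+}+(\beta(u))_{-}=:h(u),
\end{equation*}
and $h\in\cL^{1}(\gamma_{\sigma})$ by hypothesis on $\beta$. Applying the reverse Fatou lemma to the integral against $\gamma_{\sigma}$, and then the pointwise u.s.c.\ of $f_{u}$ in $(x,y)$,
\begin{equation*}
\limsup_{n}C^{\beta}(x_{n},y_{n})\le\int_{0}^{1}\limsup_{n}f_{u}(x_{n},y_{n})\,d\gamma_{\sigma}(u)\le\int_{0}^{1}f_{u}(x,y)\,d\gamma_{\sigma}(u)=C^{\beta}(x,y),
\end{equation*}
which is exactly upper semi-continuity of $C^{\beta}$ at $(x,y)$.

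The main obstacle is producing the $n$-uniform dominating function $h(u)$ needed for reverse Fatou. It would be tempting to invoke A.2 for this, but A.2 supplies only lower semi-continuous $A,B$, so $A(x_{n})$ need not be bounded above along $(x_{n},y_{n})\to(x,y)$; the argument therefore relies crucially on upper semi-continuity of $L$ itself (A.1) to make the eventual boundedness of $L(x_{n},y_{n})$ available. Once that is in hand, everything reduces to standard applications of reverse Fatou and the elementary splitting of $(L-\beta)_{+}$.
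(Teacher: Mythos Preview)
Your argument is correct and follows essentially the same route as the paper: both use upper semicontinuity of $L$ to produce a $\gamma_{\sigma}$-integrable dominating function along a convergent sequence, then apply reverse Fatou. The paper dominates by $|L(x,y)|+|\beta(u)|+1$ (using the limit point directly), while you dominate by $M_{+}+(\beta(u))_{-}$; these are the same idea. Your upper bound is slightly more economical, using only A.2, whereas the paper invokes both A.2 and A.3 to control $|L|$.

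One small quibble: your remark that finite-valuedness holds ``a.e., hence everywhere after the standard modification on a null set'' is unnecessary and a bit misleading. Since $L:\cW\to\mathbb{R}$ is real-valued, for every $(x,y)$ one has $\max(L(x,y)-\beta(u),0)\le |L(x,y)|+|\beta(u)|\in\cL^{1}(\gamma_{\sigma})$, so $C^{\beta}(x,y)<\infty$ holds pointwise without any modification; modifying on a null set could in principle destroy upper semicontinuity, so it is better not to invoke that device here.
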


\begin{proof}
The lower bound is immediate. Also $\max(L(x,y)-\beta(u),0) \leq |L(x,y)| + |\beta(u)|$. Let $(x_{n},y_{n}) \to (x,y)$. By upper semicontinuity of $L$, $\limsup_{n} L(x_{n},y_{n}) \leq L(x,y)$, and in particular, for large enough $n$, $\max(L(x_{n},y_{n})-\beta(u),0) \leq |L(x,y)|+|\beta(u)|+1 \in \cL^{1}(\gamma_{\sigma})$. Thus we have
\begin{align*} 
\limsup_{n} C^{\beta}(x_{n},y_{n}) &= \limsup_{n} \int_{0}^{1}  \max(L(x_{n},y_{n})-\beta(u),0) \,d\gamma_{\sigma}(u) \\
&\leq \int_{0}^{1}\limsup_{n}\max(L(x_{n},y_{n})-\beta(u),0) \,d\gamma_{\sigma}(u) \\
& \leq \int_{0}^{1} \max(L(x,y)-\beta(u),0)\, d\gamma_{\sigma}(u),
\end{align*} 
by Fatou's Lemma in the second line, and the upper semicontinuity of $L$ in the third line. The assumptions on $L$ imply that
\begin{equation*} 
0\leq C^{\beta}(x,y) \leq \sigma(1)(\max(|A(x)|,|a(x)|) + \max(|B(y)|,|b(y)|)) + \int_{0}^{1} |\beta(u)|\, d\gamma_{\sigma}(u),
\end{equation*} 
from which the result follows.
\end{proof}

\smallskip
The above lemma implies that $\Phi_{C^{\beta}} \neq \emptyset$ when $(1-u)^{-1}\beta(u)$ is integrable with respect to $\Gamma_{\sigma}$, i.e., when $\beta\in \cL^{1}(\gamma_{\sigma})$. Consequently, the feasible set of the dual problem is nonempty. We immediately obtain the following weak duality result.

\smallskip
\begin{proposition}[Weak Duality]
$D_{\sigma}(\mu,\nu) \geq V_{\sigma}(\mu,\nu)$.
\end{proposition}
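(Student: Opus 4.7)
The plan is to fix an arbitrary feasible $\pi \in \Pi(\mu,\nu)$ for the primal and an arbitrary feasible triple $(\beta,\varphi,\psi)$ (with $\beta \in \cL^1(\gamma_\sigma)$ and $(\varphi,\psi) \in \Phi_{C^\beta}$) for the dual, and to show directly that
\[
R_{\pi,\sigma}(L) \,\leq\, \int_0^1 \beta(u)\, d\Gamma_\sigma(u) + \int \varphi\, d\mu + \int \psi\, d\nu.
\]
Taking $\sup_\pi$ on the left and $\inf$ over dual variables on the right then yields $V_\sigma(\mu,\nu) \leq D_\sigma(\mu,\nu)$. To carry this out, I would start from the ES-averaged representation $R_{\pi,\sigma}(L) = \int_0^1 \mathrm{ES}_{u,\pi}(L)\, d\Gamma_\sigma(u)$, which is well-defined because assumptions A.2--A.3 place $L \in \cL^1(\pi)$, and then apply the Rockafellar--Uryasev characterization~\eqref{CompactESDef} pointwise in $u$: for each $u\in(0,1)$, $\mathrm{ES}_{u,\pi}(L) \leq g_u(\beta(u),\pi) = \beta(u) + (1-u)^{-1}\mathbb{E}_\pi[(L-\beta(u))_+]$, and the weaker limiting form at $u=0$. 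Integrating this inequality against $\Gamma_\sigma$ gives an upper bound consisting of the $\beta$-term plus $\int_0^1 (1-u)^{-1}\mathbb{E}_\pi[(L-\beta(u))_+]\,d\Gamma_\sigma(u)$.

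The second step is to rewrite that remaining term as $\mathbb{E}_\pi[C^\beta(X,Y)]$. Because $L$ is upper semicontinuous (hence Borel) and $\beta$ is measurable, the map $(x,y,u)\mapsto (L(x,y)-\beta(u))_+$ is jointly measurable and nonnegative, so Tonelli's theorem applies to $\pi \otimes \gamma_\sigma$. Combined with the identity $(1-u)^{-1}d\Gamma_\sigma(u) = d\gamma_\sigma(u)$ recalled in the paper, this gives
\[
\int_0^1 (1-u)^{-1}\mathbb{E}_\pi[(L-\beta(u))_+]\, d\Gamma_\sigma(u) \,=\, \mathbb{E}_\pi\!\left[\int_0^1 (L(X,Y)-\beta(u))_+\, d\gamma_\sigma(u)\right] \,=\, \mathbb{E}_\pi[C^\beta(X,Y)].
\]
Lemma~\ref{CBetaLemma} guarantees that $C^\beta$ is finite and dominated by an integrable function, so all quantities are finite.

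Finally, since $(\varphi,\psi) \in \Phi_{C^\beta}$ means $\varphi(x)+\psi(y) \geq C^\beta(x,y)$ off a set of the form $(N_1\times \cY) \cup (\cX\times N_2)$ with $\mu(N_1)=\nu(N_2)=0$, this set has $\pi$-measure zero for every $\pi \in \Pi(\mu,\nu)$, so the inequality holds $\pi$-a.s. Taking $\mathbb{E}_\pi$ and using the marginal conditions $\pi(\,\cdot\,\times \cY) = \mu$ and $\pi(\cX\times\,\cdot\,) = \nu$, we obtain $\mathbb{E}_\pi[C^\beta(X,Y)] \leq \int\varphi\, d\mu + \int\psi\, d\nu$. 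Chaining this with the previous bound completes the proof.

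The only nonroutine point is the Tonelli swap, which is where the structural choice of writing $C^\beta$ in the form~\eqref{CBetaDef} (as an integral against $\gamma_\sigma$ of a nonnegative integrand) pays off: nonnegativity removes the need for absolute-integrability verification at this stage, while joint measurability follows transparently from A.1 and the assumed measurability of $\beta$. The $\pi$-null-set argument for transferring the $\mu\otimes\nu$-a.e. dual inequality to a $\pi$-a.s. one is standard but worth stating explicitly because the dual constraint is posed in product-marginal form rather than $\pi$-a.s. form.
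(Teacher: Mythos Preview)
Your proof is correct and follows essentially the same approach as the paper's: fix a primal feasible $\pi$ and a dual feasible $(\beta,\varphi,\psi)$, use the Rockafellar--Uryasev bound $\mathrm{ES}_{u,\pi}(L)\leq g_u(\beta(u),\pi)$ pointwise in $u$, swap the order of integration via Tonelli to produce $\mathbb{E}_\pi[C^\beta]$, and then invoke the dual constraint together with the marginal conditions. The paper runs the same chain of inequalities starting from the dual side and working down to $R_{\pi,\sigma}(L)$, while you start from the primal side and work up; your write-up is more explicit about the Tonelli justification and the $\pi$-null-set argument, but the substance is identical.
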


\begin{proof}
Let $\pi$ be feasible for the primal problem and $(\beta,\varphi,\psi)$ be feasible for the dual problem. Then
\begin{align*} 
\int \beta\, d\Gamma_{\sigma} + \int \varphi\, d\mu + \int \psi\, d\nu & = \int\beta d\Gamma_{\sigma } 
+ \int (\varphi(x) + \psi(y)) \, d\pi(x,y) \\
& \geq \int \int_{0}^{1} \beta(u) + (1-u)^{-1} \max(L(x,y)-\beta(u),0) \, d\Gamma_{\sigma}(u) d\pi(x,y) \\
&= \int_{0}^{1} g_{u}(\beta(u),\pi) d\Gamma_{\sigma}(u)
= \int_{0}^{1} \mbox{ES}_{u,\pi}(L) d\Gamma_{\sigma}(u) = R_{\pi,\sigma}(L),
\end{align*}
and the result follows.
\end{proof}

\smallskip
\subsection{Auxiliary Optimization Problems and Interchanging Operations}
Before we proceed to showing strong duality, we exam some related optimization problems that arise in the proof. We begin with what is basically a restatement of Theorem 1 in \citet{RockafellarUryasev} in our context. In particular, it yields that $\mbox{VaR}_{\alpha,\pi}(L)$ is a minimizer for the optimization problem~\eqref{CompactESDef} whose optimal value is the ES of $L$ (under $\pi$). 
\smallskip
\begin{lemma}
Let $\alpha\in (0,1)$, $\pi\in\cP(\cW)$, and $L \in \cL^{1}(\pi)$, and consider the function $g_{\alpha}(\cdot,\pi):\bbR\to\bbR$ defined by~\eqref{gDefinition}. Then $g_{\alpha}(\cdot,\pi)$ is convex, $\underset{b\to\pm\infty} \lim \,g_{\alpha}(b,\pi) = \infty$, and
\begin{align*} 
\partial_{1} g_{\alpha}(b,\pi) &= [1-(1-\alpha)^{-1}\pi(L\geq b),1-(1-\alpha)^{-1}\pi(L>b)],\\
\argmin g_{\alpha}(\cdot,\pi) &= \{ b | \pi(L > b) \leq 1-\alpha \leq \pi(L\geq b)\},
\end{align*} 
where we use the notation $\partial_{1} g_{\alpha}(b,\pi)$ for the subdifferential of $g_{\alpha}(\cdot,\pi)$ at $b$.
\end{lemma}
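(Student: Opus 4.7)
The plan is to verify the four claims in sequence, noting that the statement is essentially a translation of Rockafellar–Uryasev's classical result into the current notation, so the proof should largely consist of organizing standard facts about the convex function $g_{\alpha}(\cdot,\pi)$.

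First, for convexity, I would observe that for each fixed $(x,y) \in \cW$ the map $b \mapsto (L(x,y)-b)_+$ is convex (being the positive part of an affine function of $b$), so $b \mapsto \mathbb{E}_\pi[(L-b)_+]$ is convex by Tonelli/linearity of expectation, and adding the linear term $b$ preserves convexity. For the behavior at infinity, I would rewrite
\begin{equation*}
g_\alpha(b,\pi) \;=\; b\Bigl(1 - \tfrac{1}{1-\alpha}\Bigr) + \tfrac{1}{1-\alpha}\bigl(\mathbb{E}_\pi[L] + \mathbb{E}_\pi[(b-L)_+]\bigr)
\end{equation*}
using $(L-b)_+ = (L-b) + (b-L)_+$. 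Since the coefficient of $b$ equals $-\alpha/(1-\alpha) < 0$ and $\mathbb{E}_\pi[(b-L)_+] \geq 0$, the right-hand side tends to $+\infty$ as $b \to -\infty$. For $b \to +\infty$, I would return to the original form and apply dominated convergence (using $|L| \in \cL^1(\pi)$ as the dominating function) to conclude $\mathbb{E}_\pi[(L-b)_+] \to 0$, so $g_\alpha(b,\pi) \sim b \to \infty$.

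Next, for the subdifferential, the key computation is a one-sided derivative of $b \mapsto \mathbb{E}_\pi[(L-b)_+]$. For $h > 0$, the difference quotient $h^{-1}[(L-b-h)_+ - (L-b)_+]$ lies in $[-1,0]$ and converges pointwise to $-\mathbf{1}_{\{L > b\}}$; the bounded convergence theorem then gives the right derivative of $\mathbb{E}_\pi[(L-\cdot)_+]$ at $b$ equal to $-\pi(L>b)$. The analogous argument with $h \uparrow 0$, using the pointwise limit $-\mathbf{1}_{\{L \geq b\}}$, yields the left derivative $-\pi(L \geq b)$. Combining with the linear term gives one-sided derivatives
\begin{equation*}
g_\alpha'(b^+;\pi) = 1 - (1-\alpha)^{-1}\pi(L>b), \qquad g_\alpha'(b^-;\pi) = 1 - (1-\alpha)^{-1}\pi(L \geq b),
\end{equation*}
and since $g_\alpha(\cdot,\pi)$ is a finite convex function on $\mathbb{R}$, its subdifferential at $b$ is the closed interval between the left and right derivatives, which gives the stated formula.

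Finally, the argmin set is characterized by the first-order optimality condition $0 \in \partial_1 g_\alpha(b,\pi)$. Using the interval computed above, this is equivalent to
\begin{equation*}
1 - (1-\alpha)^{-1}\pi(L \geq b) \;\leq\; 0 \;\leq\; 1 - (1-\alpha)^{-1}\pi(L > b),
\end{equation*}
which rearranges to $\pi(L > b) \leq 1-\alpha \leq \pi(L \geq b)$, as claimed. The only potentially delicate step is the interchange of limit and integral in the one-sided derivative computation, but since the difference quotients are uniformly bounded by $1$ in absolute value, bounded convergence handles it cleanly and no integrability beyond $L \in \cL^1(\pi)$ is required.
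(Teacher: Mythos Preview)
Your proof is correct and follows essentially the same approach as the paper, which merely sketches that convexity is immediate and that the subdifferential is obtained by taking difference quotients and applying the Dominated Convergence Theorem. Your argument simply fills in the details the paper omits (the limits at $\pm\infty$ and the explicit first-order optimality condition for the argmin), and your use of bounded convergence is the same device, since the difference quotients are uniformly bounded on a probability space.
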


\smallskip
\noindent The proof is straightforward: convexity is immediate; and to calculate the subdifferential, one can take difference quotients and use the Dominated Convergence Theorem.

\smallskip
\begin{remark}
It should be noted that we have not assumed that $\sigma(0)=0$. In particular $\gamma_{\sigma}$ may include a point mass at 0. We denote by $\gamma_{\sigma}(\{0\})=z_{0} \leq 1$. Observe that this also implies that $\Gamma_{\sigma}(\{0\})=z_{0}$. We introduce the notation $\tilde \gamma_{\sigma}$ and $\tilde\Gamma_{\sigma}$ for the parts of the measures that live on $(0,1)$: 
\begin{equation*}  \label{TildeMeasuresDef}
\tilde \gamma_{\sigma} := \gamma_{\sigma} - z_{0}\delta_{0}, \quad \tilde\Gamma_{\sigma} := \Gamma_{\sigma} - z_{0} \delta_{0}.
\end{equation*}
\end{remark}

\smallskip
\begin{lemma} \label{FInverseInL1}
Let $\pi\in\Pi(\mu,\nu)$, then $F_{L,\pi}^{\leftarrow} \in \cL^{1}(\tilde \Gamma_{\sigma})$.
\end{lemma}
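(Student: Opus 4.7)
The plan is to bound $\int_{(0,1)} |F^{\leftarrow}_{L,\pi}(u)|\, d\tilde\Gamma_\sigma(u)$ by splitting $F^{\leftarrow}_{L,\pi}$ into positive and negative parts and controlling each via the spectral-risk bound for the nonnegative random variables $L^+$ and $L^-$. Assumptions A.2 and A.3 immediately imply $L \in \cL^1(\pi)$ for every $\pi \in \Pi(\mu,\nu)$, since the pointwise bound $|L| \leq |A(X)|+|B(Y)|+|a(X)|+|b(Y)|$ puts each summand marginally in $\cL^1$. In particular, the Pichler inequality $R_{\pi,\sigma}(|L|) \leq \|\sigma\|_\infty \|L\|_{1} < \infty$ is at my disposal and will drive both bounds.

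For the positive part, $(F^{\leftarrow}_{L,\pi}(u))^+ = F^{\leftarrow}_{L^+,\pi}(u)$ is the quantile of a nonnegative random variable and so is nondecreasing in $u$. The elementary monotonicity inequality $f(u)(1-u) \leq \int_u^1 f(v)\, dv$ for nondecreasing $f$ therefore gives $F^{\leftarrow}_{L^+,\pi}(u) \leq \mbox{ES}_{u,\pi}(L^+)$ on $[0,1)$. Integrating against $d\tilde\Gamma_\sigma \leq d\Gamma_\sigma$ and using the representation $R_{\pi,\sigma}(L^+) = \int_{[0,1)} \mbox{ES}_{u,\pi}(L^+)\, d\Gamma_\sigma(u)$ then bounds this contribution by $R_{\pi,\sigma}(|L|) < \infty$.

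For the negative part I would exploit the pointwise identity $(F^{\leftarrow}_{L,\pi}(u))^- = F^{\leftarrow}_{L^-,\pi}(1-u)$ on $(0,1)$, which follows from $\pi(L^- \geq t) = \pi(L \leq -t)$ for $t > 0$, together with the $\cL^1(\pi)$ dominant $L^- \leq |a(X)|+|b(Y)|$ coming from A.3. Substituting $v = 1-u$, the integral becomes one of the upper quantile of the nonnegative random variable $|a(X)|+|b(Y)|$ against the pushforward of $\tilde\Gamma_\sigma$ under $u \mapsto 1-u$; using $d\tilde\Gamma_\sigma(u) = (1-u)\, d\tilde\gamma_\sigma(u)$, the factor $(1-u)$ that suppressed blowups near $u=1$ in the positive-part argument has transformed into a factor $v$ now suppressing blowups of the nondecreasing quantile near $v = 1$. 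Mirroring the positive-part argument and invoking Pichler applied to $|a(X)|+|b(Y)|$ then closes the bound. The main obstacle I foresee is handling this reflected measure rigorously, since it is not of the canonical form $(1-v)\, d\gamma_\tau(v)$ for any single spectral function $\tau$; either a careful integration-by-parts or a direct Fubini / layer-cake estimate using the boundedness of $\sigma$ together with the integrability of $|a|$ and $|b|$ under $\mu$ and $\nu$ should be sufficient to carry out the bound.
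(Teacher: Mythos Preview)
Your positive-part argument is correct and matches the paper's: $(F_{L,\pi}^{\leftarrow})^{+}=F_{L^{+},\pi}^{\leftarrow}\le \mathrm{ES}_{u,\pi}(L^{+})$, and integrating against $d\Gamma_{\sigma}$ gives $R_{\pi,\sigma}(L^{+})\le\|\sigma\|_{\infty}\|L\|_{1}$. The difficulty you flag in the negative part is real, but the specific heuristic you propose fails. After the reflection $v=1-u$ the factor $(1-u)$ becomes $v$, and $v\to 1$ as $v\to 1$, so it suppresses nothing where the nondecreasing quantile $F_{L^{-},\pi}^{\leftarrow}(v)$ actually blows up. Any control would have to come from the reflected measure $\check\gamma$ (the pushforward of $\tilde\gamma_{\sigma}$) placing little mass near $v=1$, i.e.\ from $\tilde\gamma_{\sigma}$ placing little mass near $u=0$; boundedness of $\sigma$ alone does not guarantee this, so neither the ``mirroring'' nor a generic layer-cake bound will close the argument.

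In fact the obstacle appears insurmountable, because the statement can fail under A.1--A.4. Take $\sigma(u)=\tfrac{5}{4}u^{1/4}$ (increasing, right-continuous, bounded, $\int_{0}^{1}\sigma=1$), so that $z_{0}=0$ and $d\tilde\Gamma_{\sigma}(u)=(1-u)\tfrac{5}{16}u^{-3/4}\,du$. Let $\cY=\{y_{0}\}$, let $X$ be Pareto with $\mu(X>t)=t^{-2}$ for $t\ge 1$, and set $L(x,y_{0})=-x$. Then A.1--A.4 hold (with $A=B=0$, $a(x)=-x\in\cL^{1}(\mu)$, $b=0$), $L\in\cL^{1}(\pi)$, and $|F_{L,\pi}^{\leftarrow}(u)|=u^{-1/2}$, yet
\[
\int_{(0,1)}|F_{L,\pi}^{\leftarrow}(u)|\,d\tilde\Gamma_{\sigma}(u)=\tfrac{5}{16}\int_{0}^{1}(1-u)\,u^{-5/4}\,du=\infty.
\]
The paper's own proof stumbles at the same place: for $L\le 0$ it asserts $-\int F_{L,\pi}^{\leftarrow}\,d\tilde\Gamma_{\sigma}=z_{0}\mathbb{E}_{\pi}[L]-R_{\pi,\sigma}(L)$, but already for $\Gamma_{\sigma}=\delta_{\alpha}$ this would force $\mathrm{VaR}_{\alpha}=\mathrm{ES}_{\alpha}$; the correct inequality there runs the wrong way. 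The lemma, as stated, needs an extra hypothesis tying the lower tail of $L$ to the behaviour of $\sigma$ near $0$.
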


\begin{proof} 
Suppose that $L \geq 0$. Then
\begin{equation*} 
\int_{0}^{1} F_{L,\pi}^{\leftarrow}(u)\, d\tilde\Gamma_{\sigma}(u) 
\leq \int_{0}^{1} \mbox{ES}_{u,\pi}(L)\, d\tilde\Gamma_{\sigma(u)} \leq R_{\pi,\sigma}(L) \leq \|\sigma\|_{\infty}\cdot \|L\|_{1},
\end{equation*} 
since $L \in \cL^{1}(\pi)$. If $L \leq 0$, then note that by the subadditivity of $R_{\pi,\sigma}$, $0 = R_{\pi,\sigma}(L-L) \leq R_{\pi,\sigma}(L) + R_{\pi,\sigma}(-L)$. Therefore, 
\begin{equation*}
\int_{0}^{1} |F_{L,\pi}^{\leftarrow}(u)| \, d\tilde \Gamma(u) = -\int_{0}^{1} F_{L,\pi}^{\leftarrow}(u)\, d\tilde\Gamma_{\sigma}(u) 
= z_{0} \mathbb{E}_{\pi}[L] - R_{\pi,\sigma}(L)\leq z_{0}\mathbb{E}_{\pi}[L] + R_{\pi,\sigma}(-L) < \infty.
\end{equation*}
The result then follows since $|F_{\pi,L}^{\leftarrow}| \leq \max(|F_{\pi,|L|}^{\leftarrow}|,|F_{\pi,-|L|}^{\leftarrow}|)$.
\end{proof}

\smallskip

Next, we consider the issue of interchanging the infimum and integral.

\smallskip
\begin{lemma}\label{InterchangeInfAndIntegral}
Suppose that $\pi\in\Pi(\mu,\nu)$, and let $\cL^{0}$ be the set of all Borel measurable functions from $[0,1)\to\mathbb{R}$. Then
\begin{multline*} 
K_{\pi}:= \inf_{\beta\in\cL} \left(\int_{0}^{1} g_{u}(\beta(u),\pi)\, d\Gamma_{\sigma}(u)\right) = \int_{0}^{1} \inf_{b} g_{u}(b,\pi) \, d\Gamma_{\sigma}(u) 
 = \Gamma_{\sigma}(\{0\}) \mathbb{E}_{\pi}[L]+ \int_{0}^{1} \mathrm{ES}_{u,\pi}(L) \, d\tilde\Gamma_{\sigma}(u).
\end{multline*}

\smallskip
\noindent Moreover, one has, for $\beta\in\cL^{0}$, that $\beta$ is a minimizer iff $\beta(u)\in\argmin_{b\in\mathbb{R}}(b+(1-u)^{-1}\mathbb{E}_{\pi}[(L-b)_{+}])$ for 
$\Gamma_{\sigma}$-a.e.\ $u\in [0,1)$. Finally,
\begin{equation*} 
K_{\pi}= \inf_{\beta\in \cL^{\infty}(\Gamma_{\sigma})} \left(\int_{0}^{1} g_{u}(\beta(u),\pi)\, d\Gamma_{\sigma}(u)\right).
\end{equation*}
\end{lemma}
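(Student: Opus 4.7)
The plan is to prove the three-way equality for $K_\pi$, then the minimizer characterization, then the $\cL^\infty(\Gamma_\sigma)$ statement. The lower bound $K_\pi \geq \int_0^1 \inf_b g_u(b,\pi)\, d\Gamma_\sigma(u)$ is immediate from the pointwise inequality $g_u(\beta(u),\pi) \geq \inf_b g_u(b,\pi)$, integrating against $\Gamma_\sigma$ and taking the infimum over $\beta \in \cL^0$.

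Next I would identify $\inf_b g_u(b,\pi)$ explicitly. For $u \in (0,1)$, the preceding Rockafellar--Uryasev lemma gives $\inf_b g_u(b,\pi) = \mbox{ES}_{u,\pi}(L)$, attained at $F^{\leftarrow}_{L,\pi}(u)$. For $u = 0$, using $L - b = (L-b)_+ - (b-L)_+$ one sees that $g_0(b,\pi) = \mathbb{E}_\pi[L] + \mathbb{E}_\pi[(b-L)_+]$; since for $b \leq 0$ one has $(b-L)_+ \leq L_- \in \cL^1(\pi)$ with $(b-L)_+ \to 0$ pointwise as $b \to -\infty$, dominated convergence gives $\inf_b g_0(b,\pi) = \mathbb{E}_\pi[L]$ (not attained when $L$ is $\pi$-unbounded below). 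Splitting the integral at the atom $z_0 = \Gamma_\sigma(\{0\})$ then yields the second equality of the lemma.

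For the matching upper bound, and the $\cL^\infty$ statement at the same time, I would construct bounded approximations. Fix $N \in \mathbb{N}$, set $A_N := \{u \in (0,1) : |F^{\leftarrow}_{L,\pi}(u)| \leq N\}$, and define
\[
\beta_N(u) := F^{\leftarrow}_{L,\pi}(u)\,\mathbf{1}_{A_N}(u) \quad \text{for } u \in (0,1), \qquad \beta_N(0) := -N.
\]
Then $\beta_N$ is Borel measurable with $\|\beta_N\|_\infty \leq N$, hence in $\cL^\infty(\Gamma_\sigma)$, and by the argmin characterization $g_u(\beta_N(u),\pi) = \mbox{ES}_{u,\pi}(L)$ on $A_N$. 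Splitting the integral into the three pieces over $\{0\}$, $A_N$, and $(0,1) \setminus A_N$, the atom piece $z_0\, g_0(-N,\pi) \to z_0\, \mathbb{E}_\pi[L]$ by the previous paragraph; the main piece $\int_{A_N} \mbox{ES}_{u,\pi}(L)\, d\tilde\Gamma_\sigma \to \int_{(0,1)} \mbox{ES}_{u,\pi}(L)\, d\tilde\Gamma_\sigma$ by dominated convergence, using $|\mbox{ES}_{u,\pi}(L)| \leq \mbox{ES}_{u,\pi}(|L|) + \mathbb{E}_\pi[|L|]$ together with $R_{\pi,\sigma}(|L|) \leq \|\sigma\|_\infty \|L\|_1 < \infty$ to supply the dominator; and the residual $\int_{(0,1) \setminus A_N} (1-u)^{-1}\mathbb{E}_\pi[L_+]\, d\tilde\Gamma_\sigma = \mathbb{E}_\pi[L_+]\, \tilde\gamma_\sigma((0,1) \setminus A_N) \to 0$, since $\tilde\gamma_\sigma$ is finite and $A_N \uparrow (0,1)$ as $F^{\leftarrow}_{L,\pi}$ is finite on $(0,1)$. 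Thus $\inf_{\cL^\infty(\Gamma_\sigma)} \int g_u(\beta,\pi)\, d\Gamma_\sigma \leq \mbox{RHS}$; combined with the first step and the obvious ordering $\inf_{\cL^\infty} \geq \inf_{\cL^0} = K_\pi$, every inequality becomes an equality.

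Finally, for the minimizer characterization, observe that $\beta \in \cL^0$ is a minimizer iff the nonnegative integrand $g_u(\beta(u),\pi) - \inf_b g_u(b,\pi)$ integrates to zero against $\Gamma_\sigma$, iff it vanishes $\Gamma_\sigma$-a.e., iff $\beta(u) \in \argmin_b g_u(b,\pi)$ for $\Gamma_\sigma$-a.e.\ $u$. The main obstacle throughout is the boundary point $u = 0$: when $z_0 > 0$ and $L$ is $\pi$-unbounded below, $\argmin g_0(\cdot,\pi)$ is empty and no exact minimizer exists, so the construction above must let $\beta_N(0) \to -\infty$ while keeping $\beta_N$ uniformly bounded on each $N$; this is precisely what allows the $\cL^\infty$ infimum to still equal $K_\pi$.
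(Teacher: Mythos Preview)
Your proof is correct and follows essentially the same approach as the paper: the truncation $\beta_N(u)=F^{\leftarrow}_{L,\pi}(u)\mathbf{1}_{\{|F^{\leftarrow}_{L,\pi}(u)|\leq N\}}$ on $(0,1)$ with $\beta_N(0)=-N$ is exactly the paper's construction for the $\cL^{\infty}(\Gamma_\sigma)$ statement. The only difference is that where the paper invokes \citet[Theorem~14.60]{RockafellarWets} for the interchange of infimum and integral and the minimizer characterization, you prove both directly---the interchange via the same $\beta_N$ sequence, and the characterization by the standard ``nonnegative integrand integrates to zero iff it vanishes a.e.''\ argument; your three-piece limit analysis (atom, $A_N$, residual via $\tilde\gamma_\sigma((0,1)\setminus A_N)\to 0$) is in fact more careful than the paper's one-line dominated-convergence claim.
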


\smallskip
\begin{proof} 
Since $\pi$ is fixed, define $f(b,u) = g_{u}(b,\pi)$. All the statements except for the final one follow from \citet[Theorem 14.60]{RockafellarWets} (the fact that $f$ is a normal integrand follows from \citet[Example 14.31]{RockafellarWets}). For the last statement, take $\beta_{n}(0) = -n$, $\beta(u)= F_{L,\pi}^{\leftarrow}(u)$, and $\beta_{n}(u) = \beta(u) \cdot \mathbf{1}_{\{ -n \leq \beta(u) \leq n\}}$ for $u\in (0,1)$. Then $\beta_{n} \in \cL^{\infty}(\Gamma_{\sigma})$. For $u=0$, $g_{0}(\beta_{n}(0),\pi) \to \mathbb{E}_{\pi}[L]$, and for $u\in (0,1)$, $|g_{u}(\beta_{n}(u),\pi)| \leq \mathrm{ES}_{u,\pi}(|L|)$, so that the Dominated Convergence Theorem implies that $\int_{0}^{1}g_{u}(\beta_{n}(u),\pi)\, d\tilde \Gamma_{\sigma}(u) \to \int_{0}^{1} ES_{u,\pi}(L)\, d\tilde\Gamma_{\sigma}(u)$, and the result follows.
\end{proof}

\smallskip
\begin{remark}
As noted earlier, when $z_{0} > 0$, the minimum is in general not attained. In the case when $z_{0} = 0$, the minimum is attained by $\beta(u) = F_{L,\pi}^{\leftarrow}(u)$, and we have that $\beta \in \cL^{1}(\Gamma_{\sigma})$ by Lemma~\ref{FInverseInL1}. However, it may not be the case that $\beta \in \cL^{1}(\gamma_{\sigma})$, as needed in Lemma~\ref{CBetaLemma} above in order to ensure that $C^{\beta}$ satisfies the hypotheses required to apply standard results on optimal transport duality.  For example, if $\sigma(u) = 3(1-\sqrt{1-u})$, so that $d\gamma_{\sigma} = \tfrac{3}{2}(1-u)^{-1/2}$, and $d\Gamma_{\sigma} = \tfrac{3}{2}(1-u)^{1/2}$, then by taking $L(X,Y)=X$, and assuming that under $\mu$, the real-valued random variable $X$ has distribution function $F_{X}(x) = (1-x^{-\theta})\mathbf{1}_{\{x\geq 1\}}$ for $\theta\in (1,2)$, we obtain a situation satisfying our assumptions, but with $F_{L,\pi}^{\leftarrow} \in \cL^{1}(\Gamma_{\sigma})\setminus \cL^{1}(\gamma_{\sigma})$ for all $\pi\in\Pi(\mu,\nu)$.
\end{remark}

\smallskip
Define the function $H:\Pi(\mu,\nu)\times \cL^{1}(\Gamma_{\sigma})\to {\mathbb{R}}\cup \{\infty\}$ by
\begin{equation*} 
H(\pi,\beta) := \int_{0}^{1} \beta(u) + (1-u)^{-1}\mathbb{E}_{\pi}[(L(X,Y)-\beta(u))_{+}]\, d\Gamma_{\sigma}(u).
\end{equation*} 
The integral of the first term will always be finite by assumption, but we have only shown that the integral of the second term is finite in the case where $\beta\in \cL^{1}(\gamma_{\sigma})\subseteq \cL^{1}(\Gamma_{\sigma})$. Let
\begin{equation*} 
\mathcal{D}_{H} := \{ \beta\in L^{1}(\Gamma_{\sigma}) : \forall \pi\in\Pi(\mu,\nu),\;H(\pi,\beta) < \infty \},
\end{equation*} 
and note that $\cL^{\infty}(\gamma_{\sigma}) \subseteq \cL^{1}(\gamma_{\sigma})\subseteq \mathcal{D}_{H}$.

\bigskip\noindent
We make the following observations. 

\smallskip
\begin{proposition} \label{HPropertiesProposition} \ 
\begin{enumerate} 
\item For a given $\beta\in \mathcal{D}_{H}$, $H(\cdot,\beta)$ is concave in $\pi$. In fact for $\pi_{0},\pi_{1}\in\Pi(\mu,\nu)$ and $\lambda\in (0,1)$: $H(\lambda\pi_{1} + (1-\lambda)\pi_{0},\beta) = \lambda H(\pi_{1},\beta) + (1-\lambda)H(\pi_{0},\beta)$. Furthermore, if $\beta\in \cL^{1}(\gamma_{\sigma})$, $H(\cdot,\beta)$ is upper semicontinuous in $\pi$, i.e., if $\pi_{n} \to \pi$, then $\limsup H(\pi_{n},\beta) \leq H(\pi,\beta)$.
\item For a fixed $\pi\in\Pi(\mu,\nu)$, $H(\pi,\cdot)$ is convex (and therefore $\mathcal{D}_{H}$ is a convex subset of $\cL^{1}(\Gamma_{\sigma})$).
\end{enumerate}
\end{proposition}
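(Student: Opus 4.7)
The plan is to first rewrite $H$ in a form that isolates the $\pi$-dependence inside a single expectation, and then treat the two parts of the proposition separately.

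By the non-negativity of $(L(x,y)-\beta(u))_{+}$, I can apply the Fubini--Tonelli theorem to interchange the outer integral against $\Gamma_{\sigma}$ with the $\pi$-expectation. Together with the identity $(1-u)^{-1}d\Gamma_{\sigma}(u)=d\gamma_{\sigma}(u)$ noted just before the definition~\eqref{CBetaDef}, this gives
\begin{equation*}
H(\pi,\beta) \;=\; \int_{0}^{1}\beta(u)\,d\Gamma_{\sigma}(u) \;+\; \mathbb{E}_{\pi}\bigl[C^{\beta}(X,Y)\bigr].
\end{equation*}
Since the first term does not depend on $\pi$, every statement about the $\pi$-dependence of $H$ reduces to a statement about $\pi\mapsto\mathbb{E}_{\pi}[C^{\beta}]$.

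For Part~1, the affine identity is immediate from the linearity of $\pi\mapsto\mathbb{E}_{\pi}[f]$ for any fixed measurable $f$, and concavity follows trivially. For upper semicontinuity when $\beta\in\cL^{1}(\gamma_{\sigma})$, Lemma~\ref{CBetaLemma} supplies that $C^{\beta}$ is u.s.c.\ and $0\le C^{\beta}(x,y)\le A^{\beta}(x)+B^{\beta}(y)$ with $A^{\beta}\in\cL^{1}(\mu)$ and $B^{\beta}\in\cL^{1}(\nu)$. The key observation is that, because every $\pi_{n}\in\Pi(\mu,\nu)$ has the same marginals, $\int (A^{\beta}(x)+B^{\beta}(y))\,d\pi_{n}=\int A^{\beta}\,d\mu+\int B^{\beta}\,d\nu$ is a constant independent of $n$. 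I would then truncate: for each $M>0$, $C^{\beta}\wedge M$ is bounded and u.s.c., so the Portmanteau theorem yields $\limsup_{n}\int (C^{\beta}\wedge M)\,d\pi_{n}\le \int (C^{\beta}\wedge M)\,d\pi$. The tail
\begin{equation*}
\int (C^{\beta}-C^{\beta}\wedge M)\,d\pi_{n} \;\le\; \int (A^{\beta}(x)+B^{\beta}(y))\,\mathbf{1}_{\{A^{\beta}(x)+B^{\beta}(y)\ge M\}}\,d\pi_{n}
\end{equation*}
can be split using $\{A^{\beta}+B^{\beta}\ge M\}\subseteq\{A^{\beta}\ge M/2\}\cup\{B^{\beta}\ge M/2\}$ into four terms that depend only on $\mu$ and $\nu$ and vanish as $M\to\infty$ by uniform integrability of $A^{\beta}$ and $B^{\beta}$. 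Taking $\limsup$ in $n$ and then $M\to\infty$ gives the desired u.s.c.

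For Part~2, the convexity in $\beta$ of the integrand $u\mapsto g_{u}(\beta(u),\pi)=\beta(u)+(1-u)^{-1}\mathbb{E}_{\pi}[(L-\beta(u))_{+}]$ is exactly the convexity of $g_{u}(\cdot,\pi)$ established in the first lemma of Section~\ref{DualitySpect}, applied pointwise in $u$. Integrating the pointwise convex combination inequality against $\Gamma_{\sigma}$ preserves it, which proves convexity of $H(\pi,\cdot)$. Convexity of $\mathcal{D}_{H}$ then follows: for $\beta_{1},\beta_{2}\in\mathcal{D}_{H}$ and $\lambda\in[0,1]$, $\lambda\beta_{1}+(1-\lambda)\beta_{2}\in\cL^{1}(\Gamma_{\sigma})$ as $\cL^{1}$ is a vector space, and for any $\pi\in\Pi(\mu,\nu)$ the convexity inequality just proved yields $H(\pi,\lambda\beta_{1}+(1-\lambda)\beta_{2})\le\lambda H(\pi,\beta_{1})+(1-\lambda)H(\pi,\beta_{2})<\infty$.

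The main obstacle is the upper semicontinuity claim, because $C^{\beta}$ need not be bounded and the classical Portmanteau theorem for u.s.c.\ functions requires boundedness. The truncation-plus-uniform-tail argument above is what makes this work, and it relies critically on the fixed-marginal structure of $\Pi(\mu,\nu)$ to convert the dominating bound from Lemma~\ref{CBetaLemma} into a constant integral along the sequence.
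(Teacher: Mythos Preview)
Your approach is essentially the paper's: rewrite $H(\pi,\beta)=\int\beta\,d\Gamma_{\sigma}+\mathbb{E}_{\pi}[C^{\beta}]$, read off linearity in $\pi$ and convexity in $\beta$, and reduce upper semicontinuity to that of $\pi\mapsto\mathbb{E}_{\pi}[C^{\beta}]$; the paper simply invokes \citet[Lemma 4.3]{VillaniOTOldAndNew} for this last step, whereas you prove it by hand via truncation.

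One small imprecision in your tail argument: after the set inclusion $\{A^{\beta}+B^{\beta}\ge M\}\subseteq\{A^{\beta}\ge M/2\}\cup\{B^{\beta}\ge M/2\}$, the four resulting terms do \emph{not} all depend only on the marginals---the cross terms $\int A^{\beta}(x)\mathbf{1}_{\{B^{\beta}(y)\ge M/2\}}\,d\pi_{n}$ and $\int B^{\beta}(y)\mathbf{1}_{\{A^{\beta}(x)\ge M/2\}}\,d\pi_{n}$ genuinely depend on the coupling $\pi_{n}$. The clean fix is to bound one step earlier: since $0\le C^{\beta}\le A^{\beta}+B^{\beta}$, one has $(C^{\beta}-M)_{+}\le (A^{\beta}+B^{\beta}-M)_{+}\le (A^{\beta}-M/2)_{+}+(B^{\beta}-M/2)_{+}$ by subadditivity of $t\mapsto t_{+}$, and the right-hand side integrates under any $\pi_{n}\in\Pi(\mu,\nu)$ to $\int(A^{\beta}-M/2)_{+}\,d\mu+\int(B^{\beta}-M/2)_{+}\,d\nu\to 0$ as $M\to\infty$.
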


\begin{proof}
Linearity in $\pi$ is clear, while the second result follows from the convexity of the function $b\to (L-b)_{+}$. It remains to show upper semicontinuity of $H(\beta,\cdot)$. Suppose that $\beta\in \cL^{1}(\gamma_{\sigma})$ is fixed, then, defining
\begin{equation*} 
g(x,y) := \int (1-u)^{-1} (L(x,y)-\beta(u))_{+}\, d\Gamma_{\sigma}(u),
\end{equation*} 
and using the fact that $0 \leq (L(x,y)-\beta(u))_{+} \leq |\beta(u)| + \max(A(x)+B(y),-a(x)-b(y))$, it follows that $g(x,y)$ is upper semicontinuous and bounded above by $\|\beta\|_{1,\gamma} + \sigma(1)\max(A(x)+B(y),-a(x)-b(y))$. Thus, upper semicontinuity of $H(\cdot,\beta)$ then follows from \citet[Lemma 4.3]{VillaniOTOldAndNew}.
\end{proof}

\smallskip

The following is now a consequence of standard minimax theorems (e.g., \citet[Theorem 2.10.2]{Zalinescu}).

\smallskip
\begin{proposition}\label{MinMaxProposition}
The following holds:
\begin{multline*}
\inf_{\beta\in \cL^{1}(\gamma)}\max_{\pi\in\Pi(\mu,\nu)} \left(\int_{0}^{1} \beta(u) + (1-u)^{-1}\mathbb{E}_{\pi}[(L(X,Y)-\beta(u))_{+}]\, d\Gamma(u)\right) 
\\
= \max_{\pi\in\Pi(\mu,\nu)} \inf_{\beta\in \cL^{1}(\gamma)} \left(\int_{0}^{1} \beta(u) + (1-u)^{-1}\mathbb{E}_{\pi}[(L(X,Y)-\beta(u))_{+}]\, d\Gamma(u)\right). 
\end{multline*}
\end{proposition}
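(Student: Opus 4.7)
The plan is to recognize the equality as a direct instance of the classical minimax theorem, with weak compactness of $\Pi(\mu,\nu)$ on the primal side and convexity of $H(\pi,\cdot)$ on the dual side doing all the work. Weak duality (already proved) gives the $\geq$ direction automatically, so the only content is the reverse inequality.

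First I would establish the topological framework. The set $\Pi(\mu,\nu)$ is tight, since the marginals $\mu$ and $\nu$ are tight on their respective Polish spaces (Ulam's theorem) and tightness of the marginals implies tightness of the couplings. It is also closed under weak convergence, because the marginal projections are weakly continuous. Hence by Prokhorov's theorem, $\Pi(\mu,\nu)$ is weakly compact. On the dual side, $\cL^{1}(\gamma_\sigma)$ is a convex subset of the topological vector space $\cL^{1}(\gamma_\sigma)$, and is contained in $\mathcal{D}_H$, so $H(\pi,\beta)$ is real-valued (not $+\infty$) on the entire product domain.

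Next I would read off from Proposition~\ref{HPropertiesProposition} the three structural ingredients needed by the minimax theorem: for each $\beta\in \cL^{1}(\gamma_\sigma)$, $H(\cdot,\beta)$ is affine (hence concave) and upper semicontinuous in $\pi$ for the weak topology; for each $\pi\in \Pi(\mu,\nu)$, $H(\pi,\cdot)$ is convex in $\beta$; and both sides of the variational problem live on convex sets. These are exactly the hypotheses of the minimax theorem quoted as \citet[Theorem~2.10.2]{Zalinescu}, which therefore yields
\begin{equation*}
\inf_{\beta\in \cL^{1}(\gamma_\sigma)}\sup_{\pi\in\Pi(\mu,\nu)} H(\pi,\beta) = \sup_{\pi\in\Pi(\mu,\nu)}\inf_{\beta\in \cL^{1}(\gamma_\sigma)} H(\pi,\beta).
\end{equation*}

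Finally I would upgrade both ``sup'' to ``max'': on the outer primal side, the map $\pi\mapsto \inf_{\beta} H(\pi,\beta)$ is an infimum of upper semicontinuous functions and hence upper semicontinuous, so attains its supremum on the weakly compact set $\Pi(\mu,\nu)$; on the inner primal problem inside the $\inf_\beta$, upper semicontinuity of $H(\cdot,\beta)$ plus compactness again delivers a maximizer. The main technical obstacle is matching the abstract hypotheses of Zalinescu's theorem to our concrete setting, in particular confirming that the upper semicontinuity supplied by Proposition~\ref{HPropertiesProposition} (which is for weak convergence of measures) is of the right flavour for the theorem, and verifying that one may work with $\cL^{1}(\gamma_\sigma)$ rather than the larger set $\mathcal{D}_H$ without losing the infimum---this latter point is harmless because $\cL^{1}(\gamma_\sigma) \supseteq \cL^{\infty}(\gamma_\sigma)$ already contains the ``truncated quantile'' approximants constructed in the proof of Lemma~\ref{InterchangeInfAndIntegral}.
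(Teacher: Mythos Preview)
Your proposal is correct and is essentially the same approach as the paper: the paper's entire ``proof'' is the one-line remark that the statement is a consequence of standard minimax theorems, citing the same result \citet[Theorem~2.10.2]{Zalinescu}, with the concavity/upper semicontinuity in $\pi$ and convexity in $\beta$ supplied by Proposition~\ref{HPropertiesProposition} and compactness of $\Pi(\mu,\nu)$ implicit. Your write-up simply makes these hypotheses explicit; the only minor slip is that the easy direction $\inf\sup \geq \sup\inf$ is automatic and does not rely on the weak duality result proved earlier.
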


\smallskip
\subsection{Strong Duality}

We are now ready to prove strong duality.

\smallskip
\begin{theorem}[Strong Duality] Given assumptions A.1-A.4 we have
\begin{equation*} 
V_{\sigma}(\mu,\nu) = D_{\sigma}(\mu,\nu).
\end{equation*} 
\end{theorem}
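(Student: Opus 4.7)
By weak duality (Proposition~1), $V_\sigma(\mu,\nu) \leq D_\sigma(\mu,\nu)$ is already in hand, so the task is to establish the reverse inequality. My plan is to make rigorous the informal calculation that precedes Lemma~\ref{CBetaLemma}, but traversing it backwards from $D_\sigma$ towards $V_\sigma$. The three steps are: (i) rewrite $D_\sigma(\mu,\nu)$ as $\inf_{\beta \in \cL^1(\gamma_\sigma)} \sup_{\pi} H(\pi,\beta)$ by applying classical Kantorovich duality to the inner problem for each fixed $\beta$; (ii) swap the order of infimum and supremum using the minimax result in Proposition~\ref{MinMaxProposition}; and (iii) collapse the inner infimum to $R_{\pi,\sigma}(L)$ via Lemma~\ref{InterchangeInfAndIntegral}, obtaining $D_\sigma(\mu,\nu) = \max_\pi R_{\pi,\sigma}(L) \leq V_\sigma(\mu,\nu)$.

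For step (i), fixing $\beta \in \cL^1(\gamma_\sigma)$ and applying Fubini's theorem (justified because $(L-\beta(u))_+ \geq 0$) gives
\[ H(\pi,\beta) = \int_0^1 \beta(u)\, d\Gamma_\sigma(u) + \mathbb{E}_\pi[C^\beta(X,Y)]. \]
The properties of $C^\beta$ supplied by Lemma~\ref{CBetaLemma}, namely upper semicontinuity and the integrable upper bound $A^\beta(x)+B^\beta(y)$, are exactly what is needed for the standard Kantorovich maximization duality (see~\citet[Theorem 5.10]{VillaniOTOldAndNew}) to yield
\[ \sup_{\pi\in\Pi(\mu,\nu)} \mathbb{E}_\pi[C^\beta(X,Y)] = \min_{(\varphi,\psi)\in\Phi_{C^\beta}} \int \varphi\, d\mu + \int \psi\, d\nu. \]
Adding $\int \beta\, d\Gamma_\sigma$ and taking an infimum over $\beta \in \cL^1(\gamma_\sigma)$ then produces $D_\sigma(\mu,\nu)$ on the right-hand side, completing step (i).

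Steps (ii) and (iii) are essentially bookkeeping: Proposition~\ref{MinMaxProposition} interchanges $\inf_\beta$ and $\sup_\pi$, and Lemma~\ref{InterchangeInfAndIntegral} identifies $\inf_{\beta \in \cL^1(\gamma_\sigma)} H(\pi,\beta) = R_{\pi,\sigma}(L)$; for the latter, I would use that $\gamma_\sigma$ is a finite measure (so $\cL^\infty(\Gamma_\sigma) \subseteq \cL^1(\gamma_\sigma) \subseteq \cL^0$, since every $\Gamma_\sigma$-null subset of $[0,1)$ is $\gamma_\sigma$-null) and that the lemma already yields the common value $R_{\pi,\sigma}(L)$ for the nested classes at both ends, which forces the intermediate class to give the same value. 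The main obstacle I anticipate is precisely the choice of the function class for $\beta$: the pointwise optimizer $F^{\leftarrow}_{L,\pi}$ lies in $\cL^1(\Gamma_\sigma)$ but in general not in $\cL^1(\gamma_\sigma)$ (see the remark after Lemma~\ref{InterchangeInfAndIntegral}), so an explicit optimal dual triple $(\beta,\varphi,\psi)$ cannot be exhibited. The proof must instead thread the needle by restricting $\beta$ to $\cL^1(\gamma_\sigma)$ in order to invoke Kantorovich duality for $C^\beta$, and show via the minimax step plus the truncation argument of Lemma~\ref{InterchangeInfAndIntegral} that nothing is lost at the inner infimum level.
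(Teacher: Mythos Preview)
Your proposal is correct and follows essentially the same route as the paper: fix $\beta\in\cL^{1}(\gamma_{\sigma})$ and apply Kantorovich duality to $C^{\beta}$ (using Lemma~\ref{CBetaLemma}), use Tonelli to pass to $H(\pi,\beta)$, invoke Proposition~\ref{MinMaxProposition} to swap $\inf$ and $\max$, and finish with Lemma~\ref{InterchangeInfAndIntegral}. Your sandwich observation $\cL^{\infty}(\Gamma_{\sigma})\subseteq\cL^{1}(\gamma_{\sigma})\subseteq\cL^{0}$ (valid since $d\Gamma_{\sigma}=(1-u)\,d\gamma_{\sigma}$ makes $\Gamma_{\sigma}$-null sets $\gamma_{\sigma}$-null on $[0,1)$) is exactly what closes the loop, and your remark that the optimal $\beta$ need not lie in $\cL^{1}(\gamma_{\sigma})$ matches the paper's post-theorem remark on non-attainment of the dual.
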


\begin{proof}
\begingroup
\allowdisplaybreaks
\begin{align} 
\inf_{\beta\in \cL^{1}(\gamma_{\sigma}), (\varphi,\psi)\in\Phi_{C^{\beta}}} &\int_{0}^{1} \beta(u)\, d\Gamma_{\sigma}(u) + \int \varphi\, d\mu + \int \psi\, d\nu
\nonumber
\\
&= \inf_{\beta\in \cL^{1}(\gamma_{\sigma})} \left(\int_{0}^{1}\beta(u)\, d\Gamma_{\sigma}(u) + \inf_{(\varphi,\psi)\in \Phi_{C^{\beta}}}\int \varphi\, d\mu + \int \psi\, d\nu \right) 
\nonumber
\\
&= \inf_{\beta\in \cL^{1}(\gamma_{\sigma})} \left( \int_{0}^{1} \beta(u)\, d\Gamma_{\sigma}(u) + \max_{\pi\in\Pi(\mu,\nu)} \mathbb{E}_{\pi}[C^{\beta}(x,y)]\right) 
\label{OTDualityStep}
\\
&= \inf_{\beta\in \cL^{1}(\gamma_{\sigma})} \max_{\pi\in\Pi(\mu,\nu)} \left(\int_{0}^{1} \beta(u)\, d\Gamma_{\sigma}(u) + \int_{\cX\times\cY} \int_{0}^{1} (1-u)^{-1}\max(L(x,y)-\beta(u),0)\, d\Gamma_{\sigma}(u)d\pi \right) \nonumber
\\
&= \inf_{\beta\in \cL^{1}(\gamma_{\sigma})}\max_{\pi\in\Pi(\mu,\nu)} \left(\int_{0}^{1} \beta(u) + (1-u)^{-1}\mathbb{E}_{\pi}[(L(X,Y)-\beta(u))_{+}]\, d\Gamma_{\sigma}(u)\right) 
\label{TonelliStep}
\\
&= \max_{\pi\in\Pi(\mu,\nu)} \inf_{\beta\in \cL^{1}(\gamma_{\sigma})} \left(\int_{0}^{1} \beta(u) + (1-u)^{-1}\mathbb{E}_{\pi}[(L(X,Y)-\beta(u))_{+}]\, d\Gamma_{\sigma}(u)\right) 
\label{MinMaxStep}
\\ 
&= \max_{\pi\in\Pi(\mu,\nu)} \int_{0}^{1} \inf_{b} (b + (1-u)^{-1}\mathbb{E}_{\pi}[(L(X,Y)-b)_{+}]) \, d\Gamma_{\sigma}(u),
\label{InterchangeStep}
\\
&= \max_{\pi\in\Pi(\mu,\nu)} \int_{0}^{1} \mathrm{ES}_{u}(L)\, d\Gamma_{\sigma}(u), \nonumber
\end{align} 
\endgroup

\smallskip
\noindent where~\eqref{OTDualityStep} follows from the Kantorovich Duality for Optimal Transport (e.g., \citet[Theorem 5.10]{VillaniOTOldAndNew}), \eqref{TonelliStep} uses Tonelli's Theorem, \eqref{MinMaxStep} is due to Proposition~\ref{MinMaxProposition}, and \eqref{InterchangeStep} is due to Lemma~\ref{InterchangeInfAndIntegral}.
\end{proof}

\smallskip
\begin{remark} 
Unlike in the Kantorovich duality, we have not shown attainment of the optimal dual solution. Given the importance of attainment (and uniqueness) of optimal dual solutions for understanding error distributions when approximating our problem by using problems obtained by sampling $\mu$ and $\nu$ (e.g., \citet{SommerfeldMunk}, \citet{DelBarrioLoubes}, or Section~\ref{AsymptoticDistributionSection}), this is an interesting direction for future research.
\end{remark}

\smallskip

\section{Stability of the MSP}
\label{StabilitySpect}

We consider continuity properties of the optimal values and optimal solution sets of Problem~\eqref{RobustSpectralMeasurePrimal}, when $\mu_{n}\to \mu$ and $\nu_{n} \to \nu$. We begin by recalling the following definitions (e.g., \citet{OkRealAnalysisWithEconomicApplications}).

\smallskip
\begin{definition} 
For two metric spaces $\cV$, $\cW$, a correspondence $G:\cV \rightrightarrows \cW$ is said to be: 
\begin{itemize} 
\item Upper hemicontinuous if for each $v\in\cV$, and every open $O$ of $\cW$ with $G(v) \subseteq O$, there exists some $\delta > 0$ such that $G(B_{\delta}(x)) 
\subseteq O$.
\item Lower hemicontinuous if for every $v\in\cV$, and every open $O$ with $G(v)\cap O \ne \emptyset$, there exists some $\delta > 0$ such that $G(v')\cap O \ne \emptyset$ for all $v' \in B_{\delta}(v)$.
\item Continuous if it is both upper hemicontinuous and lower hemicontinuous.
\end{itemize} 
\end{definition} 

\smallskip
\noindent Hemicontinuity can also be characterized in terms of sequences. $G$ is upper hemicontinuous if for any sequence $\{ v_{n}\}_{n\in\mathbb{N}}$ in $\cV$ and any sequence $\{w_{n}\}_{n\in\mathbb{N}}$ with $v_{n} \to v$ and $w_{n} \in G(v_{n})$ for each $n$, there exists a subsequence $\{ w_{n_{k}}\}_{k\in\mathbb{N}}$  that converges to a point in $G(v)$. $G$ is lower hemicontinuous if for any sequence $\{v_{n}\}_{n\in\mathbb{N}}$ with $v_{n} \to v$ and any $w\in G(v)$, there exists $\{w_{n}\}_{n\in\mathbb{N}}$ such that $w_{n} \to w$ and  $w_{n}\in G(v_{n})$ for each $n$.

\bigskip
The following result is proved in~\citet{Bergin} by a discretization argument. A shorter and more direct alternative proof is given in~\citet{GhossoubSaunders2020}.

\smallskip
\begin{proposition}\label{OTCorrespondenceContinuity}  The feasible set mapping $\Pi:\cP(\cX)\times\cP(\cY) 
\rightrightarrows \cP(\cX \times \cY) = 
\cP(\cW)$ is 
continuous.
\end{proposition}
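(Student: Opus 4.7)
The plan is to prove upper hemicontinuity and lower hemicontinuity of $\Pi$ separately. For upper hemicontinuity, I would argue via tightness: if $\mu_n \to \mu$ and $\nu_n \to \nu$ narrowly, then $\{\mu_n\}$ and $\{\nu_n\}$ are each tight, and since any $\pi_n \in \Pi(\mu_n,\nu_n)$ has marginals $\mu_n$ and $\nu_n$, selecting compact sets $K_\cX \subset \cX$ and $K_\cY \subset \cY$ carrying most of the marginal mass forces $\pi_n\bigl((K_\cX \times K_\cY)^c\bigr) \leq \mu_n(K_\cX^c) + \nu_n(K_\cY^c)$ to be uniformly small, so $\{\pi_n\}$ is tight in $\cP(\cW)$. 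Prokhorov's theorem then supplies a narrowly convergent subsequence $\pi_{n_k} \to \pi^*$, and continuity of the two coordinate projections $\cW \to \cX$ and $\cW \to \cY$ shows that the marginals of $\pi^*$ are $\mu$ and $\nu$, hence $\pi^* \in \Pi(\mu,\nu)$.

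Lower hemicontinuity is the real work, and I would handle it through a gluing construction. Given $\pi \in \Pi(\mu,\nu)$ together with $\mu_n \to \mu$ and $\nu_n \to \nu$, use the Prokhorov metric formula displayed earlier in the paper to pick couplings $\alpha_n \in \Pi(\mu_n,\mu)$ and $\beta_n \in \Pi(\nu,\nu_n)$ that concentrate near the diagonals: there exists $\ve_n \downarrow 0$ with $\alpha_n\bigl(\{(x_1,x_2): d_\cX(x_1,x_2) \geq \ve_n\}\bigr) \leq \ve_n$, and analogously for $\beta_n$. Apply the gluing lemma twice: first to $\alpha_n$ and $\pi$ along their shared $\mu$-marginal, then to the resulting three-factor measure and $\beta_n$ along their shared $\nu$-marginal, producing $\eta_n \in \cP(\cX \times \cX \times \cY \times \cY)$ whose bivariate marginals on coordinates $(1,2)$, $(2,3)$, and $(3,4)$ coincide with $\alpha_n$, $\pi$, and $\beta_n$ respectively. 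Define $\pi_n$ to be the marginal of $\eta_n$ on coordinates $(1,4)$, so that $\pi_n \in \Pi(\mu_n,\nu_n)$ by construction.

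To verify $\pi_n \to \pi$ narrowly, for $f \in C_b(\cW)$ observe that
\begin{equation*}
\int f\, d\pi_n - \int f\, d\pi = \int \bigl[f(x_1,y_4) - f(x_2,y_3)\bigr]\, d\eta_n(x_1,x_2,y_3,y_4),
\end{equation*}
since $\pi$ is the $(2,3)$-marginal of $\eta_n$. The family $\{\eta_n\}$ is tight because all four of its coordinate marginals are. For any $\delta > 0$, fix compact $K_\cX \subset \cX$ and $K_\cY \subset \cY$ so that $\eta_n\bigl((K_\cX \times K_\cX \times K_\cY \times K_\cY)^c\bigr) \leq \delta$ uniformly in $n$; the function $f$ is uniformly continuous on $K_\cX \times K_\cY$, so on the event $\{d_\cX(x_1,x_2) < \eta\} \cap \{d_\cY(y_3,y_4) < \eta\}$ the integrand is at most the modulus $\omega_f(\eta)$, while its complement (intersected with the compact) receives $\eta_n$-mass at most $2\ve_n$ once $\ve_n < \eta$. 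Bounding the integrand by $2\|f\|_\infty$ on the remaining small-mass sets and sending $n \to \infty$, then $\eta, \delta \to 0$, drives the expression to zero.

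The main obstacle is this last convergence step, where tightness of $\{\eta_n\}$, the uniform continuity modulus of $f$ on compacts, and the near-diagonal concentration of $\alpha_n,\beta_n$ must be combined carefully; the upper hemicontinuity argument is entirely routine, and the two-stage gluing is a standard disintegration construction valid on Polish spaces.
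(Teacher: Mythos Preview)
Your argument is correct. The paper itself does not supply a proof of this proposition; it merely cites \citet{Bergin}, who proves it by a discretization argument, and \citet{GhossoubSaunders2020}, who give a shorter direct proof. Your approach falls into the second camp: the upper hemicontinuity via tightness and Prokhorov is the standard argument, and your lower hemicontinuity proof via a two-stage application of the gluing lemma---building $\eta_n$ on $\cX\times\cX\times\cY\times\cY$ from near-diagonal couplings $\alpha_n,\beta_n$ and the target $\pi$, then projecting to coordinates $(1,4)$---is a clean, fully constructive route that avoids any discretization. The convergence step you flag as ``the main obstacle'' is handled exactly as you outline: tightness of the four one-dimensional marginals of $\eta_n$ gives the uniform compact containment, uniform continuity of $f$ on the compact controls the integrand where $(x_1,x_2)$ and $(y_3,y_4)$ are close, and the Strassen/Prokhorov coupling bound controls the mass of the complement. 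Compared with Bergin's discretization, your gluing argument is more transparent and yields the approximating sequence $\pi_n$ explicitly; it is in the spirit of (and may well coincide with) the direct proof the paper attributes to \citet{GhossoubSaunders2020}.
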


\bigskip\noindent
Recall also the Wasserstein metrics (e.g., \citet{VillaniTopicsInOT}) defined for $r\geq 1$ on $\cP(\cW)$ by: 
\begin{equation*} 
W_{r}(\pi_{1},\pi_{2}) = \left(\inf_{m\in\Pi(\pi_{1},\pi_{2})} \int_{\cW\times\cW} d_{\cV}(u,v)^{r}\, dm(u,v)\right)^{1/r},
\end{equation*} 
for probability measures $\pi_{1},\pi_{2} \in \cP_{r}(\cW)$, the collection of all probability measures on $\cW$ with finite $r^{th}$ moment. We note that $W_{r}(\pi_{n},\pi)\to 0$ if and only if $\pi_{n} \to \pi$ and $\{\pi_{n}\}_{n\in\mathbb{N}}$ has uniformly integrable $r^{th}$ moments, i.e.,
\begin{equation*} 
\lim_{R\to\infty} \limsup_{n\to\infty} \int_{d_{\cW}(w_{0},w)\geq R} d_{\cW}(w_{0},w)^{r} \, d\pi_{n}(w) = 0,
\end{equation*} 
for some (and therefore any) $w_{0} \in \cW$ (e.g., \citet[Theorem 7.12]{VillaniTopicsInOT}). We immediately obtain the following result. 

\smallskip
\begin{corollary} 
The feasible map $\Pi$ is continuous as a correspondence from $\cP_{r}(\cX)\times \cP_{r}(\cY)\rightrightarrows \cP_{r}(\cW)$.
\end{corollary}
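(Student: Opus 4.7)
The plan is to leverage Proposition~\ref{OTCorrespondenceContinuity}, which already gives continuity of $\Pi$ for the narrow topology, and upgrade each hemicontinuity to the Wasserstein topology by using precisely the characterization recalled just before the corollary: $W_r(\pi_n,\pi)\to 0$ is equivalent to narrow convergence plus uniform integrability of the $r$-th moments. Since $W_r$-convergence is stronger than narrow convergence on $\cP_r$, it therefore suffices to establish one key auxiliary claim.

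\textbf{Key UI claim.} If $\mu_n\to\mu$ in $W_r(\cX)$ and $\nu_n\to\nu$ in $W_r(\cY)$, then for every choice of couplings $\pi_n\in\Pi(\mu_n,\nu_n)$, the family $\{\pi_n\}$ has uniformly integrable $r$-th moments on $\cW$. Equip $\cW$ with the $\ell^1$ product metric $d_\cW=d_\cX+d_\cY$ (any equivalent product metric works the same way), so that $d_\cW(w_0,w)^r\leq 2^{r-1}\bigl(d_\cX(x_0,x)^r+d_\cY(y_0,y)^r\bigr)$ for a fixed $w_0=(x_0,y_0)$. It then suffices to control $\sup_n\int_{d_\cW\geq R} d_\cX(x_0,x)^r\,d\pi_n$ (and the symmetric $d_\cY$-integral) as $R\to\infty$. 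Given $\varepsilon>0$, pick $M$ with $\sup_n\int_{d_\cX\geq M}d_\cX^r\,d\mu_n<\varepsilon$ by UI of $\{\mu_n\}$, and set $C:=\sup_n\int d_\cY^r\,d\nu_n<\infty$. For $R>M$, split on $\{d_\cX\geq M\}$:
\begin{equation*}
\int_{d_\cW\geq R} d_\cX^r\,d\pi_n \;\leq\; \int_{d_\cX\geq M} d_\cX^r\,d\mu_n + M^r\,\pi_n\{d_\cW\geq R,\,d_\cX<M\}.
\end{equation*}
On the second event one has $d_\cY\geq R-M$, so Markov gives $\pi_n\{d_\cY\geq R-M\}\leq C/(R-M)^r$. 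Thus the second term is at most $M^rC/(R-M)^r\to 0$ as $R\to\infty$, and the first is already $<\varepsilon$. Since $\varepsilon$ was arbitrary, UI follows.

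\textbf{Upper hemicontinuity.} Take $(\mu_n,\nu_n)\to(\mu,\nu)$ in $\cP_r(\cX)\times\cP_r(\cY)$ and any $\pi_n\in\Pi(\mu_n,\nu_n)$. Since $W_r$-convergence implies narrow convergence, Proposition~\ref{OTCorrespondenceContinuity} produces a subsequence $\pi_{n_k}\to\pi$ narrowly with $\pi\in\Pi(\mu,\nu)$. By the key UI claim, $\{\pi_{n_k}\}$ has uniformly integrable $r$-th moments, hence $\pi_{n_k}\to\pi$ in $W_r$, which is the sequential upper hemicontinuity criterion.

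\textbf{Lower hemicontinuity.} Take $(\mu_n,\nu_n)\to(\mu,\nu)$ in $\cP_r(\cX)\times\cP_r(\cY)$ and $\pi\in\Pi(\mu,\nu)\subseteq\cP_r(\cW)$. By the lower hemicontinuity part of Proposition~\ref{OTCorrespondenceContinuity}, there exist $\pi_n\in\Pi(\mu_n,\nu_n)$ with $\pi_n\to\pi$ narrowly. Again by the key UI claim, $\{\pi_n\}$ has uniformly integrable $r$-th moments, so $\pi_n\to\pi$ in $W_r$, completing the proof.

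\textbf{Main obstacle.} The only nontrivial step is the UI claim, because the set $\{d_\cW\geq R\}$ is genuinely joint and cannot be rewritten as a marginal event; neither marginal $W_r$-UI alone controls $\int_{d_\cW\geq R}d_\cX^r\,d\pi_n$ directly. The resolution—splitting by a truncation level $M$ and then applying Markov's inequality on the complementary slice where $d_\cX<M$ forces $d_\cY\gtrsim R$—is the crucial maneuver and is what makes the Wasserstein upgrade genuinely hinge on having $W_r$-convergence of both marginals rather than merely narrow convergence plus a moment bound.
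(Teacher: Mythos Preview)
Your proof is correct. The paper gives no proof, asserting only that the corollary follows ``immediately'' from Proposition~\ref{OTCorrespondenceContinuity} together with the recalled characterization of $W_r$-convergence; you have supplied the details, and your truncation-and-Markov argument is the right way to handle the joint tail event $\{d_\cW\geq R\}$ under the uniform-integrability criterion the paper chose to state.

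A slightly shorter route---perhaps what the authors had in mind by ``immediately''---uses the equivalent characterization (also in \citet[Theorem~7.12]{VillaniTopicsInOT}) that $W_r$-convergence is narrow convergence plus convergence of the $r$-th moment. Equipping $\cW$ with the $\ell^r$ product metric so that $d_\cW(w_0,\cdot)^r=d_\cX(x_0,\cdot)^r+d_\cY(y_0,\cdot)^r$, one has for any $\pi_n\in\Pi(\mu_n,\nu_n)$ that $\int d_\cW(w_0,\cdot)^r\,d\pi_n=\int d_\cX(x_0,\cdot)^r\,d\mu_n+\int d_\cY(y_0,\cdot)^r\,d\nu_n$, and the right-hand side converges by marginal $W_r$-convergence alone; combined with the narrow convergence furnished by Proposition~\ref{OTCorrespondenceContinuity}, this makes both hemicontinuity checks one-liners. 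Your route via the UI criterion genuinely requires the splitting you give (the cross terms $\int_{d_\cY\geq R/2}d_\cX^r\,d\pi_n$ do not collapse to marginal quantities), but it is entirely sound and, as you note, works uniformly over all couplings.
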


\bigskip\noindent
Continuity of $\Pi$ can be combined  with known results on continuity of risk measures in order to derive stability results for $V_{\sigma}(\mu,\nu)$.

\bigskip
The following is Corollary~11 of \citet{PichlerDifferentProbMeasures}. 
\begin{proposition} 
Suppose that $L:\cW \to \mathbb{R}$ is H{\"{o}}lder continuous with exponent $q\leq 1$ and constant $C_{q}$, $|L(w)-L(w')|\leq C_{q}\cdot d_{\cW}(w,w')^{q}$, 
and $L$ is in both $\cL^{p}(\pi)$ and $\cL^{p}(\pi')$ for some $p\geq 1$.  
Then 
\begin{equation*} 
|R_{\pi,\sigma}(L) - R_{\pi',\sigma}(L)| \leq C_{q}\cdot W_{r}(\pi,\pi') \cdot \|\sigma\|_{r_{q}}, 
\end{equation*}
where $r_{q} \geq \tfrac{r}{r-q}$, and the norm of $\sigma$ is taken with respect to Lebesgue measure on $[0,1]$.
\end{proposition}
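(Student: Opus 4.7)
The plan is to reduce the bound to one-dimensional Wasserstein theory on $\bbR$ by pushing $\pi$ and $\pi'$ forward through $L$, and then to exploit the classical quantile representation of the $p$-Wasserstein distance on the real line. To begin, I would write the difference of spectral risk measures in terms of the generalized inverse cdfs of the pushforwards $L_*\pi$ and $L_*\pi'$:
\begin{equation*}
|R_{\pi,\sigma}(L) - R_{\pi',\sigma}(L)| \leq \int_0^1 \bigl|F^{\leftarrow}_{L,\pi}(u) - F^{\leftarrow}_{L,\pi'}(u)\bigr|\,\sigma(u)\,du,
\end{equation*}
which is immediate from the definition of $R_{\pi,\sigma}$ and the triangle inequality.

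The second step is Hölder's inequality on $([0,1],\mathrm{Leb})$ with the conjugate pair $(r_q,r_q')$ satisfying $1/r_q + 1/r_q' = 1$. The hypothesis $r_q \geq r/(r-q)$ translates to $r_q' \leq r/q$, which is exactly the bookkeeping I will need in the final Jensen step. Hölder's inequality yields
\begin{equation*}
|R_{\pi,\sigma}(L) - R_{\pi',\sigma}(L)| \leq \bigl\|F^{\leftarrow}_{L,\pi} - F^{\leftarrow}_{L,\pi'}\bigr\|_{L^{r_q'}([0,1])}\cdot\|\sigma\|_{r_q}.
\end{equation*}
I would then invoke the well-known one-dimensional identity $W_p(\rho_1,\rho_2) = \|F^{\leftarrow}_{\rho_1} - F^{\leftarrow}_{\rho_2}\|_{L^p([0,1])}$ to rewrite the first factor as $W_{r_q'}(L_*\pi,L_*\pi')$; the required finite $r_q'$-moments of $L_*\pi$ and $L_*\pi'$ come from the $\cL^p$ hypothesis on $L$ together with Hölder interpolation (choosing $r_q$ large enough if needed, which the assumption allows).

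The final step transfers the Wasserstein distance from $\bbR$ back to $\cW$. For any coupling $m \in \Pi(\pi,\pi')$, the map $(L,L)_*m$ lies in $\Pi(L_*\pi,L_*\pi')$, and Hölder continuity combined with Jensen's inequality applied to the concave map $t\mapsto t^{qr_q'/r}$ (legitimate because $qr_q' \leq r$) gives
\begin{equation*}
W_{r_q'}(L_*\pi,L_*\pi')^{r_q'} \leq \int |L(w)-L(w')|^{r_q'}\,dm(w,w') \leq C_q^{r_q'}\left(\int d_{\cW}(w,w')^r\,dm(w,w')\right)^{qr_q'/r}.
\end{equation*}
Taking the $r_q'$-th root and infimizing over $m \in \Pi(\pi,\pi')$ collapses this to $W_{r_q'}(L_*\pi,L_*\pi') \leq C_q\, W_r(\pi,\pi')^q$, which chained with the earlier displays delivers the claimed stability estimate (with exponent $q$ on $W_r$, sharpening to the stated linear form in the Lipschitz case $q=1$). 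The main technical obstacle is purely the exponent bookkeeping: one must observe the equivalence $r_q \geq r/(r-q) \Longleftrightarrow qr_q' \leq r$ so that Jensen is applied in the correct direction, and one must confirm integrability of the quantile difference in $L^{r_q'}([0,1])$ from the $\cL^p$ assumption before the quantile-Wasserstein identity can be used.
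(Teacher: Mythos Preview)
The paper does not actually prove this proposition; it is quoted verbatim as Corollary~11 of \citet{PichlerDifferentProbMeasures}. Your argument is essentially Pichler's: push forward to $\bbR$ via $L$, invoke the quantile representation of one-dimensional Wasserstein distance, apply H{\"o}lder with the conjugate pair $(r_q,r_q')$, and then control $W_{r_q'}(L_*\pi,L_*\pi')$ by $W_r(\pi,\pi')$ using the H{\"o}lder continuity of $L$ together with the monotonicity of $L^p$-norms on a probability space (your Jensen step). The exponent bookkeeping $r_q \geq r/(r-q) \Leftrightarrow q r_q' \leq r$ is exactly the condition that appears in Pichler's statement, and your integrability check for the quantile difference is the right caveat.

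One point worth flagging: you end up with the bound $C_q\, W_r(\pi,\pi')^{q}\,\|\sigma\|_{r_q}$, not $C_q\, W_r(\pi,\pi')\,\|\sigma\|_{r_q}$, and you are right to do so. Pichler's Corollary~11 carries the exponent $q$ on the Wasserstein distance; the linear-in-$W_r$ form printed in the present paper is a transcription slip that only matches the source when $q=1$. So your ``sharpening to the stated linear form in the Lipschitz case'' is not a weakening on your part---it is the correct general statement.
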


\bigskip
\begin{proposition} 
Suppose that $\pi_{n},\pi \in \cP_{r}(\cW)$, $L_{n}:\cW\to\mathbb{R}$ are H{\"{o}}lder continuous with exponent $q\leq 1$ and constant $C_{q}$ (independent of $n$), $L_{n}$ is in both  $\cL^{1}(\pi_{n})$ and  $\cL^{1}(\pi)$ for $n=1,2,\ldots$, and $L_{n} \to L$ in $\cL^{1}(\pi)$.  If $W_{r}(\pi_{n},\pi)\to 0$, then $R_{\pi_{n},\sigma}(L_{n}) \to R_{\pi,\sigma}(L)$. 
\end{proposition}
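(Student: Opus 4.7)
The plan is a standard triangle inequality: insert the auxiliary quantity $R_{\pi,\sigma}(L_n)$, giving
\begin{equation*}
|R_{\pi_n,\sigma}(L_n) - R_{\pi,\sigma}(L)| \leq |R_{\pi_n,\sigma}(L_n) - R_{\pi,\sigma}(L_n)| + |R_{\pi,\sigma}(L_n) - R_{\pi,\sigma}(L)|,
\end{equation*}
and show each term tends to zero using a different hypothesis. The first term measures the effect of changing the measure with the loss fixed, and the second measures the effect of changing the loss with the measure fixed.

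For the first term I would apply the preceding proposition (Pichler's H\"{o}lder estimate) to the fixed function $L_n$: since each $L_n$ is H\"{o}lder continuous with the same exponent $q$ and constant $C_q$, and $L_n \in \cL^1(\pi_n) \cap \cL^1(\pi)$, we get
\begin{equation*}
|R_{\pi_n,\sigma}(L_n) - R_{\pi,\sigma}(L_n)| \leq C_q\cdot W_r(\pi_n,\pi)\cdot \|\sigma\|_{r_q}.
\end{equation*}
The constant $C_q \|\sigma\|_{r_q}$ is finite (since $\sigma$ is bounded on $[0,1]$) and independent of $n$, so $W_r(\pi_n,\pi) \to 0$ forces this term to $0$.

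For the second term I would use only the coherence of $R_{\pi,\sigma}$ together with the norm inequality $R_{\pi,\sigma}(|Z|) \leq \|\sigma\|_\infty \|Z\|_1$ quoted earlier in the paper. Subadditivity and cash invariance give $R_{\pi,\sigma}(L_n) - R_{\pi,\sigma}(L) \leq R_{\pi,\sigma}(L_n - L)$, and similarly in the opposite direction, and monotonicity then yields $|R_{\pi,\sigma}(L_n) - R_{\pi,\sigma}(L)| \leq R_{\pi,\sigma}(|L_n - L|) \leq \|\sigma\|_\infty \|L_n - L\|_{\cL^1(\pi)}$, which tends to zero by hypothesis. (Note $L \in \cL^1(\pi)$ automatically since $L_n \to L$ in that space.) Combining the two bounds completes the proof.

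I do not expect a serious obstacle: the only mild point to check is that every norm of $\sigma$ appearing in the first estimate is finite, which is immediate from assumption A.4 and the compactness of $[0,1]$; and that the Hölder constants for the $L_n$ are uniform in $n$, which is exactly what we have assumed. The argument never requires H\"{o}lder continuity of the limit $L$ itself, which is fortunate because $\cL^1$-convergence does not in general preserve that regularity.
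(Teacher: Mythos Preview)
Your proposal is correct and follows essentially the same approach as the paper: the paper's proof is precisely the triangle inequality split $|R_{\pi_{n},\sigma}(L_{n}) - R_{\pi,\sigma}(L)| \leq C_{q}\cdot W_{r}(\pi,\pi_{n}) \cdot \|\sigma\|_{r_{q}} + \|L_{n}-L\|_{\pi,1}\cdot \|\sigma\|_{\infty}$, bounding the first term via Pichler's estimate and the second via the $\cL^1$-Lipschitz property of $R_{\pi,\sigma}$. Your write-up actually supplies more justification than the paper does for the second bound (invoking subadditivity and monotonicity explicitly), but the argument is the same.
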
 

\begin{proof}
\begin{equation*}
\begin{split}
|R_{\pi_{n},\sigma}(L_{n}) - R_{\pi,\sigma}(L)| 
&\leq |R_{\pi_{n},\sigma}(L_{n}) - R_{\pi,\sigma}(L_{n})| + |R_{\pi,\sigma}(L_{n})-R_{\pi,\sigma}(L)| \\
&\leq C_{q}\cdot W_{r}(\pi,\pi_{n}) \cdot \|\sigma\|_{r_{q}} + \|L_{n}-L\|_{\pi,1}\cdot \|\sigma\|_{\infty}.
\end{split}
\end{equation*}
\end{proof}

\begin{proposition} 
Suppose that $W_{r}(\mu_{n},\mu)\to 0$, $W_{r}(\nu_{n},\nu)\to 0$, and $L_{n}:\cW\to\mathbb{R}$ are H{\"{o}}lder continuous with exponent $q\leq 1$ and constant $C_{q}$ (independent of $n$). Suppose that for any $\pi_{n}\in \Pi(\mu_{n},\nu_{n})$, $\pi\in\Pi(\mu,\nu)$, $L_{n}$ is in both  $\cL^{1}(\pi_{n})$ and  $\cL^{1}(\pi)$ for $n=1,2,\ldots$, and $L_{n} \to L$ in $\cL^{1}(\pi)$. Under the assumptions of the previous propositions, $V_{\sigma}(\mu_{n},\nu_{n}) \to V_{\sigma}(\mu,\nu)$. In addition, if $\{\pi_{n}^{*}\}_{n \in \mathbb{N}}$  is a sequence of maximizers for $V_{\sigma}(\mu_{n},\nu_{n})$, then, up to extraction of a subsequence, there exists $\pi\in\Pi(\mu,\nu)$ with $W_{r}(\pi^{*}_{n},\pi^{*}) \to 0$ and $\pi$ is a maximizer for $V_{\sigma}(\mu,\nu)$.
\end{proposition}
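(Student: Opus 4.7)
The plan is to prove the value convergence via matching $\liminf$ and $\limsup$ inequalities, and to obtain the subsequential Wasserstein convergence of the maximizers as a byproduct of the $\limsup$ argument. Both directions rest on combining the Wasserstein-topology version of the continuity of the feasible-set correspondence $\Pi$ (namely the corollary following Proposition~\ref{OTCorrespondenceContinuity}) with the continuity of the spectral risk measure under $W_r$-convergence proved in the previous proposition.

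For the $\liminf$ inequality, I would fix an arbitrary $\pi \in \Pi(\mu,\nu)$. Lower hemicontinuity of $\Pi : \cP_r(\cX)\times\cP_r(\cY)\rightrightarrows\cP_r(\cW)$ produces a sequence $\pi_n \in \Pi(\mu_n,\nu_n)$ with $W_r(\pi_n,\pi)\to 0$. The preceding proposition then yields $R_{\pi_n,\sigma}(L_n)\to R_{\pi,\sigma}(L)$. Since $V_\sigma(\mu_n,\nu_n)\geq R_{\pi_n,\sigma}(L_n)$ by definition of the primal problem, passing to the $\liminf$ and then to the supremum over $\pi\in\Pi(\mu,\nu)$ gives $\liminf_n V_\sigma(\mu_n,\nu_n)\geq V_\sigma(\mu,\nu)$.

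For the $\limsup$ inequality, I would select for each $n$ a near-optimal $\pi_n \in \Pi(\mu_n,\nu_n)$ with $R_{\pi_n,\sigma}(L_n)\geq V_\sigma(\mu_n,\nu_n)-1/n$; the same argument applied to the actual maximizers $\pi_n^*$ simultaneously proves the second assertion. Upper hemicontinuity of the Wasserstein-valued $\Pi$ extracts a subsequence $\pi_{n_k}$ with $W_r(\pi_{n_k},\pi^*)\to 0$ for some $\pi^*\in\Pi(\mu,\nu)$. Applying the previous proposition once more gives $R_{\pi_{n_k},\sigma}(L_{n_k})\to R_{\pi^*,\sigma}(L)\leq V_\sigma(\mu,\nu)$, and combining with the near-optimality lower bound yields $\limsup_n V_\sigma(\mu_n,\nu_n)\leq V_\sigma(\mu,\nu)$. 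Together with the $\liminf$ direction this gives $V_\sigma(\mu_n,\nu_n)\to V_\sigma(\mu,\nu)$; and when the $\pi_n=\pi_n^*$ are genuine maximizers, the extracted $\pi^*$ satisfies $R_{\pi^*,\sigma}(L)=V_\sigma(\mu,\nu)$, so it is a maximizer of the limit problem.

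The main technical point is the Wasserstein (rather than merely weak) hemicontinuity of $\Pi$ on the spaces $\cP_r$, which is precisely what the cited corollary provides: it encodes the fact that uniform integrability of the $r$-th moments of the marginals transfers to the joint transport plans, so that weak convergence of any extracted subsequence automatically upgrades to $W_r$-convergence. Once this is in hand the remainder of the argument is a direct double-inequality chase using only results already established in the paper.
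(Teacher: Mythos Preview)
Your approach is essentially the same as the paper's: a $\liminf$/$\limsup$ sandwich obtained by combining the $W_r$-continuity of the feasible correspondence $\Pi$ with the previous proposition on continuity of $R_{\pi,\sigma}$. The paper fixes an actual maximizer $\pi\in\Pi(\mu,\nu)$ for the $\liminf$ step (rather than an arbitrary $\pi$ followed by a supremum), and uses genuine maximizers $\pi_n^*$ throughout for the $\limsup$ step, but these are cosmetic differences.

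One small point to tighten: in your $\limsup$ argument, upper hemicontinuity hands you \emph{some} subsequence $n_k$ along which $\pi_{n_k}\to\pi^*$ in $W_r$, and hence $V_\sigma(\mu_{n_k},\nu_{n_k})\to R_{\pi^*,\sigma}(L)\leq V_\sigma(\mu,\nu)$; but that subsequence need not realize the full $\limsup$. The paper handles this by first passing to a subsequence along which $V_\sigma(\mu_{n_k},\nu_{n_k})\to\limsup_n V_\sigma(\mu_n,\nu_n)$, and only then invoking compactness/upper hemicontinuity to extract a further $W_r$-convergent subsequence. With that extra sentence your argument is complete and matches the paper's.
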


\begin{proof} 
The proof follows a standard argument (e.g., \citet[Theorem 8.6.6]{Lucchetti}).  Let $\pi$ attain $V_{\sigma}(\mu,\nu)$. There exists $\pi_{n} \in \Pi(\mu_{n},\nu_{n})$ such that $W_{r}(\pi_{n},\pi) \to 0$. By the above proposition, it follows that
\begin{equation*} 
V_{\sigma}(\mu_{n},\nu_{n}) \geq R_{\sigma,\pi_{n}}(L_{n}) \to R_{\sigma,\pi}(L) = V_{\sigma}(\mu,\nu),
\end{equation*} 
so that $V_{\sigma}(\mu,\nu) \leq \liminf V_{\sigma}(\mu_{n},\nu_{n})$. Let $\pi_{n}^{*}\in \Pi(\mu_{n},\nu_{n})$ attain $V_{\sigma}(\mu_{n},\nu_{n})$. By passing to a subsequence $n_{k}$ if necessary, we can assume that $\lim_{k\to\infty} V_{\sigma}(\mu_{n_{k}},\nu_{n_{k}}) = \limsup V_{\sigma}(\mu_{n},\nu_{n})$. An application of Prokhorov's Theorem yields that $\{\pi^{*}_{n_{k}}\}_{k \in \mathbb{N}}$ is relatively compact. Extracting a convergent subsequence, which we still denote $\pi_{n_{k}}$, we obtain that: 
\begin{equation*} 
\limsup V_{\sigma}(\mu_{n},\nu_{n}) = \lim_{k\to\infty} R_{\sigma,\pi^{*}_{n_{k}}}(L_{n_{k}}) = R_{\sigma,\pi}(L) \leq V_{\sigma}(\mu,\nu) 
\leq \liminf V_{\sigma}(\mu_{n},\nu_{n}),
\end{equation*} 
so that all inequalities are equalities, and the proof is complete. 
\end{proof}

\smallskip
\noindent Note that the conditions of the above result are met if, for example $L_{n} \to L$ in the H{\"{o}}lder norm.

\smallskip

\section{The Special Case of Expected Shortfall}
\label{ESsection}

In this section, we examine the case of ES. This is the spectral risk measure corresponding to $\Gamma_{\sigma} = \delta_{\alpha}$ for some $\alpha \in (0,1)$, referred to as the confidence level (equivalently, $\sigma = (1-\alpha)^{-1} \mathbf{1}_{[\alpha,1]}$).  ES is an extremely popular risk measure both in academic research and practical applications, and the basis for the market risk capital charge in the current international standard for banking regulations (\citetalias{BaselMarketRiskCharge}).

\smallskip
\subsection{Definition and Problem Formulation}
ES has an explicit dual representation as a coherent risk measure (e.g., \citet[Theorem 4.52]{FollmerSchied}): 
\begin{equation*} 
\label{ESDualRep}
\mbox{ES}_{\alpha,\pi}(L) = \max_{\Theta\in G_{\alpha}(\pi)} \mathbb{E}_{\Theta}[L],
\end{equation*} 
where $G_{\alpha}(\pi)$ is the set of all probability measures $\Theta$ that are absolutely continuous with respect to $\pi$, with density satisfying $\tfrac{d\Theta}{d\pi} \leq (1-\alpha)^{-1}$:
\begin{equation*} 
\label{SetGalphaPi}
G_{\alpha}(\pi):= \left\{ \Theta \in \cP(\cW) \; \vert \;  \Theta\ll \pi, \frac{d\Theta}{d\pi} \leq (1-\alpha)^{-1}\right\},
\end{equation*} 
\noindent with the inequality holding $\pi$-a.s. Furthermore, it is known (e.g., \citet[Remark 4.53]{FollmerSchied}) that the above maximum is attained by the probability measure $\Theta_{0}\in G_{\alpha}(\pi)$ with density
\begin{equation} \label{OptimalThetaFormula}
\frac{d\Theta_{0}}{d\pi} = \frac{1}{1-\alpha}\left(\mathbf{1}_{\{L > q\}} + \kappa \mathbf{1}_{\{L=1\}}\right),
\end{equation} 
where $q$ is an $\alpha$-quantile of $L$, and where $\kappa$ is defined as
\begin{equation*} 
\kappa := \begin{cases} 
0 & \pi(L=q)=0, \\
\frac{(1-\alpha)-\pi(L>q)}{\pi(L=q)} & \mbox{otherwise.}
\end{cases}
\end{equation*} 

\bigskip

Just as we defined the MSP as a robust version of the spectral risk measure for a given loss random variable $L$ on $\cW$, we define here the Maximum Expected Shortfall at confidence level $\alpha$, consistent with given prescribed marginals, as the maximum value of $\mbox{ES}_{\alpha,\pi}(L)$ among all measures $\pi$ with the prescribed marginals:

\smallskip 
\begin{definition}\label{DefMES}
For a given loss random variable $L:\cW\to\mathbb{R}$ and given marginal distributions $\mu$ on $\cX$ and $\nu$ on $\cY$, the Maximum Expected Shortfall ($\mathrm{MES}$) at confidence level $\alpha$ associated with $L$ is defined as
\begin{equation*}
\mathrm{MES}_{\alpha}(L) :=\sup_{\pi\in \Pi(\mu,\nu)} \mathrm{ES}_{\alpha,\pi}(L).
\end{equation*} 
\end{definition}

\noindent It then follows that 
\begin{align*} 
\mbox{MES}_{\alpha}(L)&=\sup_{\pi\in \Pi(\mu,\nu)} \mbox{ES}_{\alpha,\pi}(L) \\
&= \sup_{\pi\in\Pi(\mu,\nu)} \min_{\beta\in\mathbb{R}}(\beta + (1-\alpha)^{-1}\mathbb{E}_{\pi}[(L-\beta)_{+}]) \\
&= \sup_{\pi\in\Pi(\mu,\nu),\Theta\in G_{\alpha}(\pi)} \mathbb{E}_{\Theta}[L].
\end{align*}

\noindent If we want to emphasize the dependence on the marginal distributions $\mu$ and $\nu$, we employ the notation
\begin{align} 
V_{\alpha}(\mu,\nu) &:= \sup_{\pi\in\Pi(\mu,\nu)} \min_{\beta\in\mathbb{R}}(\beta + (1-\alpha)^{-1}\mathbb{E}_{\pi}[(L-\beta)_{+}]) 
\label{WorstCVaRCompactPrimal} \\
&= \sup_{(\pi,\Theta)\in F_{\alpha}(\mu,\nu)} \int_{\cX \times \cY} L(x,y) d\Theta,  \label{WorstCVaRPrimal} 
\end{align}
\noindent where the correspondence $F_{\alpha}: \cP(\cX) \times \cP(\cY) \rightrightarrows \cP(\cX \times \cY) \times \cP(\cX \times \cY)$ is defined as
\begin{equation*} \label{PrimalFeasibleSetMapping}
F_{\alpha}(\mu,\nu) := \left\{ (\pi,\Theta) \; \vert \; \pi\in\Pi(\mu,\nu), \Theta\in G_{\alpha}(\pi)\right\}.
\end{equation*} 

\smallskip

As we have shown above, Problem~\eqref{WorstCVaRCompactPrimal} has a dual problem naturally associated with it: 
\begin{equation}\label{WorstCVaRDual}
D_{\alpha}(\mu,\nu) := \inf_{(\varphi,\psi,\beta) \in H_{\alpha}(L)} \int_{\cX} \varphi\, d\mu + \int_{\cY} \psi\, d\nu + \beta,
\end{equation} 
where $H_{\alpha}$ is the set of all $(\varphi,\psi,\beta)\in \cL^{1}(\mu)\times \cL^{1}(\nu)\times \bbR$ for which $(1-\alpha)(\varphi(x)+\psi(y)) \geq \max(L(x,y)-\beta,0)$. 

\bigskip\noindent
Alternatively, one could attempt to derive a dual problem from~\eqref{WorstCVaRPrimal}. Based on formal calculations similar to those used earlier, one arrives at: 
\begin{equation} \label{WorstCVaRDual}
\tilde D_{\alpha}(\mu,\nu) := \inf_{(\varphi,\psi,\rho,\beta)\in \tilde H_{\alpha}(L)} \int_{\cX} \varphi d\mu + \int_{\cY} \psi d\nu + \beta,
\end{equation} 
where $\tilde H_{\alpha}(L)$ is the set of all $(\varphi,\psi,\rho,\beta)\in C_{b}(\cX)\times C_{b}(\cY) \times C_{b}(\cX\times\cY) \times \mathbb{R}$ such that for all $(x,y)\in \cX \times \cY$:
\begin{gather*}
(1-\alpha)(\varphi(x) + \psi(y)) \geq \rho(x,y) \\
\rho(x,y) + \beta \geq L(x,y) \\
\rho(x,y) \geq 0.
\end{gather*}
This version of the duality can be proved directly by starting with Problem~\eqref{WorstCVaRDual} and applying results from convex duality together with an approximation argument, as in~\citet{VillaniTopicsInOT}. While perhaps more complicated, the pair~\eqref{WorstCVaRPrimal}, \eqref{WorstCVaRDual} is also more explicit, with the primal problem being linear in the decision variables $(\pi,\Theta)$, and the extra dual variables $\rho$ having a natural interpretation as slack variables for the constraint $(1-\alpha)(\varphi(x)+\psi(y)) \geq \max(L(x,y)-\beta,0)$.\footnote{One could attempt a similar treatment for the general spectral risk measure problem. However, this would require introducing a decision variable $\Theta_{\alpha}\in \cP(\cX\times\cY)$ for each $\alpha\in (0,1)$; and it is unclear whether any benefit would be derived from these complications.}

\bigskip
In the remainder of this section, we investigate two topics: (i) the first is attainment of the optimal solution in the dual problem $D_{\alpha}(\mu,\nu)$; and, (ii) the second is continuity of the feasible set correspondence $F_{\alpha}(\mu,\nu)$. We recall the function $g_{\alpha}(b,\pi)$ of~(\ref{gDefinition}). The point of the next lemma is to ensure the existence of an interval $[k^{*},K^{*}]$ containing the minimizers of $g(\cdot,\pi)$ for {\em all\/} $\pi\in\Pi(\mu,\nu)$, so that $\min_{b\in\mathbb{R}} g(b,\pi) = \min_{b\in [k^{*},K^{*}]} g(b,\pi)$ simultaneously for all $\pi\in\Pi(\mu,\nu)$.

\smallskip
\begin{lemma}
Let $\alpha\in (0,1)$, and suppose that {\bf A.2, A.3} hold. Then there exist $k^{*},K^{*}$ such that 
\begin{align*}
\pi(L \geq K^{*}) &< 1-\alpha, \quad \forall\pi\in\Pi(\mu,\nu), \\
\pi(L > k^{*}) &> 1-\alpha, \quad \forall\pi\in\Pi(\mu,\nu).
\end{align*}
\end{lemma}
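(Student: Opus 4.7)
The plan is to apply Markov's inequality to tail events of $L$, using the integrable envelope bounds from \textbf{A.2} and \textbf{A.3}. The key observation that enables uniformity over $\Pi(\mu,\nu)$ is that these envelopes are \emph{separable}, i.e., of the form $f(x) + g(y)$, so their expectations under any coupling $\pi \in \Pi(\mu,\nu)$ reduce to integrals against the fixed marginals $\mu$ and $\nu$, which are independent of $\pi$.

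For the upper threshold $K^*$, first observe that by \textbf{A.2}, $\{L \geq K^*\} \subseteq \{A(X)+B(Y) \geq K^*\}$ up to a $\pi$-null set, and this is in turn contained in $\{A^+(X) + B^+(Y) \geq K^*\}$ where $A^+ = \max(A,0)$, $B^+ = \max(B,0)$. For any $\pi \in \Pi(\mu,\nu)$, $\mathbb{E}_{\pi}[A^+(X) + B^+(Y)] = \int A^+\, d\mu + \int B^+\, d\nu < \infty$, so Markov's inequality yields
\[
\pi(L \geq K^*) \leq \frac{\|A^+\|_{\cL^1(\mu)} + \|B^+\|_{\cL^1(\nu)}}{K^*}
\]
uniformly in $\pi \in \Pi(\mu,\nu)$, provided $K^* > 0$. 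Choosing $K^*$ strictly larger than $(\|A^+\|_{\cL^1(\mu)} + \|B^+\|_{\cL^1(\nu)})/(1-\alpha)$ forces the right-hand side to be strictly less than $1-\alpha$.

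For the lower threshold $k^*$, a symmetric argument works using \textbf{A.3}: $\{L \leq k^*\} \subseteq \{a(X) + b(Y) \leq k^*\} = \{-a(X) - b(Y) \geq -k^*\} \subseteq \{a^-(X) + b^-(Y) \geq -k^*\}$ where $a^- = \max(-a,0)$, $b^- = \max(-b,0)$. For $k^* < 0$, Markov's inequality gives
\[
\pi(L \leq k^*) \leq \frac{\|a^-\|_{\cL^1(\mu)} + \|b^-\|_{\cL^1(\nu)}}{-k^*},
\]
uniformly in $\pi \in \Pi(\mu,\nu)$, and I take $-k^*$ sufficiently large that this bound is strictly less than $\alpha$, which yields $\pi(L > k^*) = 1 - \pi(L \leq k^*) > 1-\alpha$.

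There is no real obstacle; the proof is elementary, and the semicontinuity properties of $A, B, a, b$ are not used here (they enter elsewhere in the paper). The only minor care needed is the passage from possibly signed envelopes to genuine $\cL^1$ bounds via positive and negative parts, which legitimizes the application of Markov's inequality. The uniform choice of $K^*$ and $k^*$ over $\Pi(\mu,\nu)$ is precisely what is needed downstream to localize the minimization in $g_\alpha(\cdot,\pi)$ to a fixed compact interval $[k^*,K^*]$ simultaneously for all couplings.
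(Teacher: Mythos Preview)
Your proof is correct and shares the paper's core idea: exploit the separable form of the envelopes $A(x)+B(y)$ and $a(x)+b(y)$ so that the tail bounds reduce to quantities depending only on the marginals $\mu,\nu$, hence uniform over $\Pi(\mu,\nu)$. The execution differs slightly: you apply Markov's inequality to $A^{+}(X)+B^{+}(Y)$ (resp.\ $a^{-}(X)+b^{-}(Y)$), obtaining an explicit threshold in terms of $\cL^{1}$ norms, whereas the paper instead splits the event via a union bound, e.g.\ $\{A(x)+B(y)\geq K^{*}\}\subseteq\{A(x)\geq K^{*}/2\}\cup\{B(y)\geq K^{*}/2\}$, and then lets $\mu(A\geq K^{*}/2)+\nu(B\geq K^{*}/2)\to 0$ as $K^{*}\to\infty$. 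Your route yields a quantitative threshold directly; the paper's avoids passing to positive and negative parts. Both are elementary and equivalent in strength, and you are right that the semicontinuity hypotheses play no role here.
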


\begin{proof} 
$\pi(L\geq K^{*}) \leq \pi(A(x)+B(y) \geq K^{*}) \leq \mu(A(x)\geq K^{*}/2) + \nu(B(y) \geq K^{*}/2)$, which is less than $1-\alpha$ for $K^{*}$ large enough, while $\pi(L > k^{*}) \geq \pi(a(x) + b(y) > k^{*}) \geq \pi(\min(a(x),b(y)) > k^{*}/2) \geq 1 - \mu(a(x) \leq k^{*}/2) - \nu(b(y) \leq k^{*}/2)$, which is greater than $1-\alpha$ for $k^{*}$ small enough.
\end{proof}

\smallskip
We note that the above lemma is also a consequence of known bounds on the VaR of a sum given the marginal distributions of its components, and that (semi)-explicit formulas for the best general values for $k^{*},K^{*}$ are known in terms of the distributions of $A(x),B(y),a(x)$ and $b(y)$ (e.g., \citet{RuschendorfVaR} or~\citet{MakarovVaR}). We now note the following.

\begin{itemize} 
\item For a fixed $\beta\in\mathbb{R}$, $g_{\alpha}(\beta,\cdot)$ is concave and upper-semicontinuous in $\pi$ (by Proposition~\ref{HPropertiesProposition} with $\Gamma_{\sigma} = \delta_{\alpha}$). 
\item For a fixed $\pi\in\Pi(\mu,\nu)$, $g_{\alpha}(\cdot,\pi)$ is convex and real-valued on $\mathbb{R}$ and therefore convex and continuous on $[k^{*},K^{*}]$.
\end{itemize}

\noindent By \citet[Theorem 2.10.2]{Zalinescu}, it follows that under {\bf A.1-A.3} the following holds:\footnote{A special case of this result was given in~\citet{MemartoluieThesis}.}
\begin{equation*}
\max_{\pi\in\Pi(\mu,\nu)} \min_{\beta\in\mathbb{R}}(\beta + (1-\alpha)^{-1}\mathbb{E}_{\pi}[(L-\beta)_{+}])  
= \min_{\beta\in\mathbb{R}}\max_{\pi\in\Pi(\mu,\nu)} (\beta + (1-\alpha)^{-1}\mathbb{E}_{\pi}[(L-\beta)_{+}]).
\end{equation*} 

\noindent
In particular, there is a value $\beta^{*} \in [k^{*},K^{*}]$ at which the minimal value for the dual problem is attained. For this $\beta^{*}$, standard results (e.g., \citet[Theorem 5.10]{VillaniOTOldAndNew}) imply that there is an optimal dual solution $(\varphi^{*},\psi^{*})\in \cL^{1}(\mu)\times \cL^{1}(\nu)$ for the optimal transport problem $\max_{\pi\in\Pi(\mu,\nu)} \mathbb{E}_{\pi}[(L-\beta^{*})_{+}]$, and therefore $(\beta^{*},\varphi^{*},\psi^{*})$ is an optimal dual solution for our problem. Attainment of the primal solution $\pi^{*} \in \Pi(\mu,\nu)$ has already been shown. The existence of an optimal $\Theta^{*}$ follows from~\eqref{OptimalThetaFormula}, or the fact that $G_{\pi^{*}}(\alpha)$ is compact.

\smallskip

\subsection{Continuity of the Feasible Set Correspondence}
In this section, we investigate the continuity of the feasible set correspondence $F_{\alpha}$. This can be used to derive stability results for the optimal pairs $(\pi^{*},\Theta^{*})$ in a manner similar to the treatment of the spectral risk measure problem above. Given the fact that $\Pi$ is continuous, continuity of $F_{\alpha}$ is a consequence of the following result.

\smallskip
\begin{theorem}\label{GAlphaStability}
The correspondence $G_{\alpha}:\cP(\cW) \rightrightarrows \cP(\cW)$ is continuous.
\end{theorem}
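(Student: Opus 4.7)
The plan is to handle upper and lower hemicontinuity separately, throughout viewing membership $\Theta \in G_\alpha(\pi)$ via the equivalent condition $(1-\alpha)\Theta \leq \pi$ as an ordering of positive measures.

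For upper hemicontinuity, take any sequences with $\pi_n \to \pi$ and $\Theta_n \in G_\alpha(\pi_n)$. Tightness of $\{\pi_n\}$ (Prokhorov, from weak convergence) transfers to $\{\Theta_n\}$ via $\Theta_n(K^c) \leq (1-\alpha)^{-1}\pi_n(K^c)$, so along a subsequence $\Theta_{n_k} \to \Theta$ weakly. Passing to the limit in $\int \phi\, d\Theta_{n_k} \leq (1-\alpha)^{-1} \int \phi\, d\pi_{n_k}$ for arbitrary nonnegative $\phi \in C_b(\cW)$, then approximating indicators of open sets from below by such $\phi$ and invoking outer regularity of Borel measures on Polish spaces, gives $\Theta(A) \leq (1-\alpha)^{-1}\pi(A)$ for every Borel $A$, so $\Theta \in G_\alpha(\pi)$ by Radon-Nikodym.

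Lower hemicontinuity is the substantive step. Given $\Theta \in G_\alpha(\pi)$ with density $f = d\Theta/d\pi$ satisfying $0 \leq f \leq (1-\alpha)^{-1}$, I would invoke Skorokhod's representation to obtain $W_n \sim \pi_n$, $W \sim \pi$ on a common probability space with $W_n \to W$ almost surely; the joint law $\gamma_n \in \Pi(\pi_n, \pi)$ disintegrates as $\gamma_n(dw, dw') = \pi_n(dw)\, K_n(w, dw')$ for a Markov kernel $K_n$. Defining $h_n(w) := \int f(w')\, K_n(w, dw')$, one has $0 \leq h_n \leq (1-\alpha)^{-1}$ pointwise (since $K_n(w, \cdot)$ is a probability measure) and $\int h_n\, d\pi_n = \int f\, d\pi = 1$ (from the $\pi$-marginal constraint on $\gamma_n$), so $\Theta_n := h_n \pi_n \in G_\alpha(\pi_n)$. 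For any $\phi \in C_b(\cW)$, the identity $\int \phi\, d\Theta_n = \mathbb{E}[\phi(W_n) f(W)]$ together with
\begin{equation*}
\bigl|\mathbb{E}[\phi(W_n) f(W)] - \mathbb{E}[\phi(W) f(W)]\bigr| \leq (1-\alpha)^{-1}\, \mathbb{E}\bigl[|\phi(W_n) - \phi(W)|\bigr] \to 0,
\end{equation*}
which holds by bounded convergence, yields $\int \phi\, d\Theta_n \to \mathbb{E}[\phi(W) f(W)] = \int \phi\, d\Theta$, so $\Theta_n \to \Theta$ weakly.

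The main obstacle is this lower hemicontinuity construction. Naive attempts to reuse $f$ as a density against $\pi_n$ fail because $\pi$ and $\pi_n$ need not share a natural dominating measure, and ad hoc rescalings can break the pointwise bound $(1-\alpha)^{-1}$. The coupling-plus-conditional-expectation construction, in which $h_n$ is essentially $\mathbb{E}[f(W) \mid W_n = \cdot]$, resolves both difficulties simultaneously: the upper bound is automatic via Jensen applied to a probability kernel, and the total mass is enforced by the marginal constraint on $\gamma_n$.
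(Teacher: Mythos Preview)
Your proof is correct. The upper hemicontinuity argument is essentially identical to the paper's (tightness inherited from $\pi_n$, limit passage in the inequality $\int\phi\,d\Theta_n\leq(1-\alpha)^{-1}\int\phi\,d\pi_n$ for nonnegative $\phi\in C_b$, then regularity). Where you genuinely diverge is in lower hemicontinuity. The paper first treats the special case in which $d\Theta/d\pi$ has a bounded continuous version $\eta$, forming the candidate $\eta\,d\pi_n$ and then repairing the total mass: if $\int\eta\,d\pi_n\geq 1$ it rescales, and if $\int\eta\,d\pi_n<1$ it invokes a technical ``shift'' lemma to push $\eta$ up without violating the cap $(1-\alpha)^{-1}$. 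The general case is then reduced to this one by $\cL^1(\pi)$-approximation of the density by continuous functions together with a diagonal/compactness argument. Your Skorokhod-coupling construction bypasses all of this: by transporting $f$ through the conditional-expectation kernel $K_n$, you obtain $h_n$ that automatically satisfies both $0\leq h_n\leq(1-\alpha)^{-1}$ (Jensen against a probability kernel) and $\int h_n\,d\pi_n=1$ (marginal constraint), so no normalisation repair is needed, and weak convergence $\Theta_n\to\Theta$ follows in one line from $W_n\to W$ a.s.\ and bounded convergence. Your route is shorter and more conceptual; the paper's route is more elementary in the sense that it avoids Skorokhod and disintegration, trading those for the explicit shift estimate.
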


The remainder of this section is devoted to proving this result. We begin by showing that a bound on the Radon-Nikodym derivatives of $\Theta$ with respect to $\pi$ is preserved under weak convergence.

\smallskip
\begin{lemma}  \label{StabilityOfRNBound}
Let $\{\pi_{n}\}_{n\in\mathbb{N}}$ and $\{\Theta_{n}\}_{n\in\mathbb{N}}$ be sequences in $\cP(\cW)$, and suppose that $\pi_{n}\to\pi$, $\Theta_{n}\to\Theta$, and $\Theta_{n} \ll\pi_{n}$, with $\tfrac{d\Theta_{n}}{d\pi_{n}}\leq M$ ($\pi_{n}$-a.s.) for all $n \in \mathbb{N}$ and some $M\in [1,\infty)$. Then $\Theta \ll \pi$, and $\tfrac{d\Theta}{d\pi} \leq M$ ($\pi$-a.s.). 
\end{lemma}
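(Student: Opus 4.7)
The core idea is to convert the $\pi_n$-almost-sure bound on the densities into an inequality between integrals against continuous test functions, and then transfer it to the limit via weak convergence. For any nonnegative $f \in C_b(\cW)$,
\[
\int f \, d\Theta_n \;=\; \int f \, \frac{d\Theta_n}{d\pi_n}\, d\pi_n \;\leq\; M \int f \, d\pi_n,
\]
and passing to the limit on both sides using $\Theta_n \to \Theta$ and $\pi_n \to \pi$ gives $\int f \, d\Theta \leq M \int f \, d\pi$ for every nonnegative $f \in C_b(\cW)$.

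Next I would promote this inequality from continuous test functions to arbitrary open sets. For an open $U \subseteq \cW$, set
\[
f_k(w) \;:=\; \min\bigl(\,k \cdot d_{\cW}(w, U^c),\, 1\bigr).
\]
Each $f_k$ is nonnegative, bounded, and continuous, and $\{f_k\}_{k\in\mathbb{N}}$ increases pointwise to $\mathbf{1}_U$. Applying the monotone convergence theorem to both sides of the inequality above yields $\Theta(U) \leq M \pi(U)$ for every open $U \subseteq \cW$.

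To extend the inequality to all Borel sets, I would invoke outer regularity: since $\cW$ is Polish, both $\pi$ and $\Theta$ are Radon measures, so for any Borel $A \subseteq \cW$,
\[
\Theta(A) \;=\; \inf\bigl\{\Theta(U) : U \supseteq A,\; U \text{ open}\bigr\} \;\leq\; M \inf\bigl\{\pi(U) : U \supseteq A,\; U \text{ open}\bigr\} \;=\; M\pi(A).
\]
In particular, $\pi(A)=0$ implies $\Theta(A)=0$, so $\Theta \ll \pi$. Setting $h := d\Theta/d\pi$ and, for each $\varepsilon > 0$, considering $A_\varepsilon := \{h > M + \varepsilon\}$, I would observe that $\Theta(A_\varepsilon) \geq (M+\varepsilon)\pi(A_\varepsilon)$ while also $\Theta(A_\varepsilon) \leq M \pi(A_\varepsilon)$, forcing $\pi(A_\varepsilon) = 0$; letting $\varepsilon \downarrow 0$ gives $h \leq M$ $\pi$-a.s., as required.

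The only subtle step is the passage from integrals against continuous test functions to measures of open sets, but because the approximants $f_k$ are uniformly bounded by $1$ and increasing, monotone convergence handles it cleanly; no further tightness or integrability considerations are needed. Everything else rests on standard facts about weak convergence and on the fact that every finite Borel measure on a Polish space is Radon.
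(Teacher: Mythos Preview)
Your proof is correct and follows essentially the same route as the paper's: bound integrals of nonnegative $f\in C_b(\cW)$ via the density bound and weak convergence, approximate indicators of open sets by continuous functions to get $\Theta(U)\le M\pi(U)$, extend to all Borel sets by regularity, and conclude the density bound by testing on $\{d\Theta/d\pi > M\}$. The only cosmetic differences are that you give an explicit construction of the approximants $f_k$ and use monotone convergence (the paper cites dominated convergence), and you pass through the sets $A_\varepsilon$ rather than arguing directly on $\{h>M\}$; none of this changes the substance.
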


\begin{proof} 
Let $f\in C_{b}(\cW)$, $f\geq 0$. Then: 
\begin{equation*} 
\int f d\Theta  = \lim_{n\to\infty} \int f \tfrac{d\Theta_{n}}{d\pi_{n}} d\pi_{n} 
\leq M \lim_{n\to\infty} \int f d\pi_{n} = M \int f d\pi.
\end{equation*} 
Let $U$ be an open set, and $\{f_{k}\}_{k \in \mathbb{N}}$ a sequence of continuous functions $0\leq f_{k} \leq 1$, for all $k \in\mathbb{N}$, converging pointwise to $\mathbf{1}_{U}$. Dominated convergence then implies that $\Theta(U) \leq M \pi(U)$, and the regularity of the measures $\Theta$ and $\pi$ yields $\Theta \ll \pi$, and in fact $\Theta(A) \leq M\pi(A)$ for any measurable set $A$. Let $A := \left\{ \tfrac{d\Theta}{d\pi} > M\right\}$, and assume by way of contradiction that $\pi(A) > 0$. Then $\Theta(A) = \int_{A} \tfrac{d\Theta}{d\pi} \, d\pi > M\pi(A)$, a contradiction.
\end{proof}

\begin{proposition}\label{FeasibleSetUpperHemicontinuity}
The correspondence $G_{\alpha}:\cP(\cW) \rightrightarrows \cP(\cW)$ is upper hemicontinuous.
\end{proposition}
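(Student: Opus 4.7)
The plan is to use the sequential characterization of upper hemicontinuity and leverage the previously established Lemma~\ref{StabilityOfRNBound}. So I start by taking an arbitrary sequence $\{\pi_n\}_{n\in\mathbb{N}}\subset\cP(\cW)$ with $\pi_n\to\pi$, and a sequence $\{\Theta_n\}_{n\in\mathbb{N}}$ with $\Theta_n\in G_\alpha(\pi_n)$ for each $n$. My goal is to extract a subsequence $\Theta_{n_k}\to\Theta$ with $\Theta\in G_\alpha(\pi)$.

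The first main step is to establish tightness of $\{\Theta_n\}_{n\in\mathbb{N}}$ so that Prokhorov's theorem yields a weakly convergent subsequence. Since $\pi_n\to\pi$, the sequence $\{\pi_n\}_{n\in\mathbb{N}}$ is tight by Prokhorov (the weak limit being a probability measure ensures the family is relatively compact, hence tight). Fix $\varepsilon>0$ and choose a compact $K\subset\cW$ such that $\pi_n(K^c)\leq (1-\alpha)\varepsilon$ for all $n$. Then using $\tfrac{d\Theta_n}{d\pi_n}\leq (1-\alpha)^{-1}$,
\begin{equation*}
\Theta_n(K^c) = \int_{K^c}\frac{d\Theta_n}{d\pi_n}\,d\pi_n \leq (1-\alpha)^{-1}\pi_n(K^c) \leq \varepsilon,
\end{equation*}
uniformly in $n$, so $\{\Theta_n\}_{n\in\mathbb{N}}$ is tight.

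By Prokhorov, some subsequence $\Theta_{n_k}$ converges weakly to a probability measure $\Theta\in\cP(\cW)$. Applying Lemma~\ref{StabilityOfRNBound} to the pair of convergent sequences $\pi_{n_k}\to\pi$ and $\Theta_{n_k}\to\Theta$ (with uniform bound $M=(1-\alpha)^{-1}$) yields $\Theta\ll\pi$ and $\tfrac{d\Theta}{d\pi}\leq (1-\alpha)^{-1}$, i.e., $\Theta\in G_\alpha(\pi)$. This establishes upper hemicontinuity via the sequential characterization recalled just before the statement.

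I do not expect any serious obstacle here: the tightness transfer is an elementary one-line computation using the density bound, and the preservation of the Radon--Nikodym bound under weak convergence has already been done in Lemma~\ref{StabilityOfRNBound}. The only mild subtlety worth flagging is that one must use tightness of $\{\pi_n\}$ (which follows from weak convergence to a probability measure, not just relative compactness) to produce the required compact set $K$; this is the place where the Polish assumption on $\cW$ enters through Prokhorov's theorem.
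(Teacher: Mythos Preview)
Your proof is correct and follows essentially the same route as the paper: take $\pi_n\to\pi$ and $\Theta_n\in G_\alpha(\pi_n)$, use the density bound to transfer tightness from $\{\pi_n\}$ to $\{\Theta_n\}$, extract a convergent subsequence via Prokhorov, and then invoke Lemma~\ref{StabilityOfRNBound}. The paper compresses your tightness computation into the single observation that $\Theta_n(A)\leq (1-\alpha)^{-1}\pi_n(A)$ for all measurable $A$ implies relative compactness of $\{\Theta_n\}$, but the content is identical.
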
 

\begin{proof} 
Let $\pi_{n} \to\pi$, and $\Theta_{n} \in G_{\alpha}(\pi_{n})$ for each $n$. Since $\Theta_{n}(A) \leq (1-\alpha)^{-1}\pi_{n}(A)$ for all $A$, $\{\Theta_{n}\}_{n\in\mathbb{N}}$ is relatively compact. We thus obtain a subsequence $\{\Theta_{n_{k}}\}_{k\in\mathbb{N}}$ converging to some $\Theta\in\cP(\cW)$. The fact that $\Theta \in G_{\alpha}(\pi)$ follows from Lemma~\ref{StabilityOfRNBound}.
\end{proof}

We will need the following technical lemma. 

\begin{lemma}\label{TechnicalShiftLemma}
Suppose that $m\in\cP(\cW)$, $A > 1$, and $g\in \cL^{\infty}(m)$, with $0\leq g\leq A$ and $\int g\, dm = c \in (0,1)$, with $\delta^{2}-4c(1-c) > 0$, where  $\delta = A-1$. For $r \geq 0$, define $g_{r} = \min(A,g+r)$. Then there exists $r$ such that $\int g_{r}\, dm = 1$ and $r\leq k_{c,\delta}$ where: 
\begin{equation*} 
k_{c,\delta} = A - c - \frac{\delta + \sqrt{\delta^{2}-4c(1-c)}}{2}.
\end{equation*} 
\end{lemma}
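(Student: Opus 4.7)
The plan is to introduce $\phi(r) := \int g_r \, dm$ on $r \ge 0$ and locate the desired $r$ as a solution of $\phi(r) = 1$ via the intermediate value theorem. Since $g \le A$ we have $g_0 = g$, so $\phi(0) = c < 1$; the map $r \mapsto g_r(\omega)$ is pointwise non-decreasing and bounded by $A$, so by bounded convergence $\phi$ is continuous and non-decreasing in $r$; and $\phi(r) \to A > 1$ as $r \to \infty$. Existence of some $r \ge 0$ with $\phi(r) = 1$ is then immediate, and the substance of the lemma is to show that such an $r$ can be chosen no larger than $k_{c,\delta}$.

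To this end, I would exhibit a lower bound on $\phi(r)$ and evaluate it at $r = k_{c,\delta}$. Splitting according to whether $g + r$ exceeds $A$,
\begin{equation*}
\phi(r) = \int_{\{g \le A - r\}}(g+r)\,dm + A \, m(\{g > A - r\}) = c + r \, m(\{g \le A - r\}) + \int_{\{g > A-r\}}(A-g)\,dm.
\end{equation*}
Since $A - g \ge 0$, and by Markov's inequality $m(\{g > A-r\}) \le c/(A-r)$ for $0 \le r < A$, one obtains
\begin{equation*}
\phi(r) \ge c + r\!\left(1 - \frac{c}{A-r}\right) = c + \frac{r(A-r-c)}{A-r}.
\end{equation*}

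Clearing the positive denominator $A - r$, the inequality $\phi(r) \ge 1$ is implied by the quadratic condition
\begin{equation*}
r^2 - (A + 1 - 2c)\,r + A(1-c) \le 0.
\end{equation*}
A direct computation of the discriminant yields $(A+1-2c)^2 - 4A(1-c) = (A-1)^2 - 4c(1-c) = \delta^2 - 4c(1-c)$, which is positive by hypothesis. Substituting $A = \delta + 1$, the smaller root of this quadratic is exactly
\begin{equation*}
\tfrac{1}{2}\bigl(\delta + 2 - 2c - \sqrt{\delta^2 - 4c(1-c)}\bigr) = A - c - \tfrac{\delta + \sqrt{\delta^2 - 4c(1-c)}}{2} = k_{c,\delta}.
\end{equation*}
Hence the quadratic vanishes at $r = k_{c,\delta}$, so the lower bound gives $\phi(k_{c,\delta}) \ge 1$, and the intermediate value theorem (applied on $[0, k_{c,\delta}]$) produces $r \in [0, k_{c,\delta}]$ with $\phi(r) = 1$.

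The main obstacle is not any single step but the recognition that the crude pair of inequalities $\int_{\{g > A-r\}}(A-g)\,dm \ge 0$ together with Markov's bound $m(\{g > A-r\}) \le c/(A-r)$ already produces a lower bound on $\phi(r)$ whose level set $\{\phi \ge 1\}$ is governed by a quadratic with discriminant $\delta^2 - 4c(1-c)$ and smaller root $k_{c,\delta}$. Once the decomposition and Markov step are in place, matching the threshold to the quantity $k_{c,\delta}$ in the statement is a mechanical verification, but it is this matching that explains the otherwise mysterious hypothesis $\delta^2 - 4c(1-c) > 0$ and the precise form of the bound.
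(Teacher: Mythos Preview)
Your proof is correct and follows essentially the same approach as the paper: decompose $\phi(r)$ by splitting on $\{g \le A-r\}$, drop the nonnegative term $\int_{\{g>A-r\}}(A-g)\,dm$, bound $m(\{g\le A-r\})$ via Markov's inequality, and invoke the intermediate value theorem. The paper carries out the computation at the single point $r=k_{c,\delta}$ (rewriting via $\varepsilon = A-c-k_{c,\delta}$, which is a root of $x^2-\delta x + c(1-c)$), whereas you keep $r$ generic and identify $k_{c,\delta}$ as the smaller root of the resulting quadratic; this is only a presentational difference.
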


\begin{proof}
For simplicity, denote $k_{c,\delta}$ by $k$, and observe that $k > 0$. Now,
\begin{align*} 
\int g_{k}\, dm &= \int (g+k)\cdot \mathbf{1}_{\{g\leq A-k\}}\, dm + \int A\cdot \mathbf{1}_{\{g > A-k\}}\, dm \\
&= c + k \cdot m(g\geq A-k) + \int (A-g)\cdot \mathbf{1}_{\{g > A-k\}}\, dm \\
& \geq c + k\cdot m(g \geq A-k).
\end{align*}
Take
\begin{equation*} 
\ve := A - c - k = \frac{\delta + \sqrt{\delta^{2}-4c(1-c)}}{2} > 0,
\end{equation*} 
so that $A-k = c+\ve$, and $k = A-(c+\ve)$. By Markov's inequality $m(g \leq c + \ve) \geq \frac{\ve}{c + \ve}$, so that
\begin{align*}
\int g_{k}\, dm &\geq c + (A-(c+\ve))\frac{\ve}{c+\ve} \\
&= \frac{c^{2} + c\ve + \ve A-\ve c-\ve^{2}}{c+\ve} \\
&=  \frac{c+\ve-(\ve^{2}-\delta \ve+c(1-c))}{c+\ve} =1.
\end{align*} 
Define $Q(r) := \int g_{r} \, dm$. Then $Q(0) = c$, $Q(k) \geq 1$, and $Q$ is continuous by Dominated Convergence. Thus, there exists $r \in (0,k]$ such that $Q(r)=1$.
\end{proof}

\smallskip \noindent
We note that for a fixed $A$ (and therefore $\delta$), $k_{c,\delta} \to 0$ as $c\to 1$.

\bigskip
\begin{lemma}\label{ContinuousRNLemma}
Suppose that $\pi\in\cP(\cW)$, $\{\pi_{n}\}_{n\in\mathbb{N}}$ is such that $\pi_{n}\to\pi$, $\Theta\in G_{\alpha}(\pi)$, and there is a version of $\frac{d\Theta}{d\pi} \in C_{b}(\cW)$. Then there exists a sequence $\{\Theta_{n}\}_{n\in\mathbb{N}}$, with $\Theta_{n}\in G_{\alpha}(\pi_{n})$, such that $\Theta_{n}\to\Theta$.
\end{lemma}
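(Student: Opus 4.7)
The plan is to use the continuous density $f := d\Theta/d\pi \in C_b(\cW)$ (with $0 \leq f \leq A := (1-\alpha)^{-1}$ and $\int f\, d\pi = 1$) to manufacture densities $g_n$ for $\Theta_n$ against $\pi_n$ that are uniformly close to $f$. The key observation is that because $f$ is bounded continuous and $\pi_n \to \pi$ weakly, the constants $c_n := \int f\, d\pi_n$ satisfy $c_n \to 1$. The only obstacle to setting $d\Theta_n/d\pi_n = f$ outright is that $c_n$ may differ from $1$, so we must perturb $f$ in a way that (i) restores total mass $1$, (ii) preserves the cap $A$, and (iii) vanishes in sup-norm as $n\to\infty$.

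The perturbation splits into three cases depending on $c_n$. If $c_n = 1$, take $g_n := f$. If $c_n > 1$, take $g_n := f/c_n$; since $c_n > 1$ the cap $g_n \leq A$ is preserved, and $\|g_n - f\|_\infty \leq A\,|1 - 1/c_n| \to 0$. If $c_n < 1$, invoke Lemma~\ref{TechnicalShiftLemma} with $m = \pi_n$, $g = f$, $c = c_n$: because $c_n \to 1$, the quantity $4c_n(1-c_n)$ is eventually smaller than $\delta^2 = (A-1)^2$, so for $n$ large the lemma applies and yields $r_n \in (0, k_{c_n,\delta}]$ with $\int \min(A, f+r_n)\, d\pi_n = 1$. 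Set $g_n := \min(A, f + r_n)$; then $0 \leq g_n \leq A$, and the closed-form expression for $k_{c,\delta}$ (together with $\sqrt{\delta^2 - 4c(1-c)} \to \delta$ as $c \to 1$) gives $k_{c_n,\delta} \to 0$, whence $\|g_n - f\|_\infty \leq r_n \to 0$. For the finitely many small $n$ where none of the three cases is available, pick any $\Theta_n \in G_\alpha(\pi_n)$; these terms do not affect the limit.

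Define $\Theta_n$ by $d\Theta_n/d\pi_n = g_n$. By construction $\Theta_n \in \cP(\cW)$ and $\Theta_n \in G_\alpha(\pi_n)$. To verify weak convergence $\Theta_n \to \Theta$, fix $h \in C_b(\cW)$ and split
\begin{equation*}
\int h\, d\Theta_n \;=\; \int h f\, d\pi_n \;+\; \int h (g_n - f)\, d\pi_n.
\end{equation*}
The first term converges to $\int h f\, d\pi = \int h\, d\Theta$ because $h f \in C_b(\cW)$ and $\pi_n \to \pi$ weakly. The second term is bounded in absolute value by $\|h\|_\infty \cdot \|g_n - f\|_\infty \to 0$.

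The only real difficulty is handling the case $c_n < 1$, where rescaling would breach the cap $A$; this is precisely the situation Lemma~\ref{TechnicalShiftLemma} was designed for. The rest is bookkeeping: tracking that the perturbation vanishes uniformly so that the weak limit is dictated by $f$ against the limiting measure $\pi$.
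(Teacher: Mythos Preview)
Your proof is correct and follows essentially the same approach as the paper: rescale the continuous density when $c_n \geq 1$, and invoke Lemma~\ref{TechnicalShiftLemma} to shift it upward when $c_n < 1$, then verify weak convergence by splitting $\int h\,d\Theta_n$ into $\int hf\,d\pi_n$ plus a remainder controlled by $\|g_n-f\|_\infty$. You are slightly more explicit than the paper about the discriminant hypothesis $\delta^2 - 4c_n(1-c_n) > 0$ holding only eventually, but otherwise the arguments coincide.
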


\begin{proof}
Let $\eta\in C_{b}(\cW)$ be a version of $\frac{d\Theta}{d\pi}$ with $0\leq \eta\leq (1-\alpha)^{-1}$. Let $c_{n} := \int \eta d\pi_{n}$, for each $n$. Since $\eta\in C_{b}(\cW)$, $c_{n} \to 1$. If $c_{n} \geq 1$, define $d\Theta_{n} = c_{n}^{-1}\eta d\pi_{n}$, and note that for such $n$ and $f\in C_{b}(\cW)$: 
\begin{gather} 
\left| \int f d\Theta_{n} - \int f d\Theta\right| = \left| \int f\eta c_{n}^{-1} d\pi_{n} - \int f\eta d\pi\right| 
\leq (1-c_{n}^{-1}) \|f\eta\|_{\infty} +  \left| \int f\eta d\pi_{n} - \int f\eta d\pi\right| \to 0. \nonumber
\end{gather}
If $c_{n} < 1$, apply Lemma~\ref{TechnicalShiftLemma} with $c=c_{n}$, $A=(1-\alpha)^{-1}$, $\delta = (1-\alpha)^{-1}-1$,
and $g=\eta$, to obtain $g_{r_{n}} = \min((1-\alpha)^{-1},\eta + r_{n})$ with $\int g_{r_{n}}\, d\pi_{n}=1$ and $r_{n} \leq k_{c_{n},\delta}$, so that $r_{n}  \to 0$ as $n\to\infty$. Then, with $d\Theta_{n} = g_{r_{n}}d\pi_{n}$,
\begin{gather} 
\left| \int f d\Theta_{n} - \int f d\Theta\right| = \left| \int f g_{r_{n}} d\pi_{n} - \int f\eta d\pi\right| \leq \|f\|_{\infty} r_{n} +  \left| \int f\eta d\pi_{n} - \int f\eta d\pi\right| \to 0. \nonumber
\end{gather}
\end{proof}

\smallskip
\begin{proposition}\label{FeasibleSetLowerHemicontinuity}
The correspondence $G_{\alpha}:\cP(\cW) \rightrightarrows \cP(\cW)$ is lower hemicontinuous.
\end{proposition}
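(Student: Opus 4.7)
My plan is to derive lower hemicontinuity from Lemma~\ref{ContinuousRNLemma} via a density argument: approximate an arbitrary $\Theta\in G_\alpha(\pi)$ by elements of $G_\alpha(\pi)$ whose Radon--Nikodym derivatives admit bounded continuous versions, apply that lemma to each approximant, and then extract a diagonal sequence. Concretely, let $\pi_n\to\pi$ and $\Theta\in G_\alpha(\pi)$, and set $\eta:=\tfrac{d\Theta}{d\pi}$, which $\pi$-a.s.\ takes values in $[0,A]$ with $A:=(1-\alpha)^{-1}$. Since $\cW$ is Polish and $\pi$ is a Borel probability measure, $C_b(\cW)$ is dense in $\cL^1(\pi)$, so one can choose $\phi_k\in C_b(\cW)$ with $\phi_k\to\eta$ in $\cL^1(\pi)$. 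Truncating yields $\tilde\phi_k:=\min(A,\max(0,\phi_k))\in C_b(\cW)$, and since truncation onto $[0,A]$ is a contraction and $\eta$ already lies in $[0,A]$, we still have $\tilde\phi_k\to\eta$ in $\cL^1(\pi)$.

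The next step is to renormalize the $\tilde\phi_k$ into probability densities while retaining the ceiling $A$ and the continuity. Let $c_k:=\int\tilde\phi_k\,d\pi$, so $c_k\to 1$. For $c_k\geq 1$, I would set $\eta_k:=\tilde\phi_k/c_k$; for $c_k<1$ (once $c_k$ is close enough to $1$ that $\delta^2-4c_k(1-c_k)>0$, with $\delta:=A-1$), I would invoke Lemma~\ref{TechnicalShiftLemma} to produce $r_k\leq k_{c_k,\delta}$ with $\int\min(A,\tilde\phi_k+r_k)\,d\pi=1$ and take $\eta_k:=\min(A,\tilde\phi_k+r_k)$. In either case a short estimate gives $\eta_k\in C_b(\cW)$, $0\leq\eta_k\leq A$, and $\eta_k\to\eta$ in $\cL^1(\pi)$. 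Setting $d\Theta^{(k)}:=\eta_k\,d\pi$ thus yields $\Theta^{(k)}\in G_\alpha(\pi)$ with a $C_b(\cW)$ version of its density, and $\Theta^{(k)}\to\Theta$ weakly.

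Lemma~\ref{ContinuousRNLemma} then supplies, for each $k$, a sequence $\{\Theta^{(k)}_n\}_{n\in\mathbb{N}}$ with $\Theta^{(k)}_n\in G_\alpha(\pi_n)$ and $\Theta^{(k)}_n\to\Theta^{(k)}$. Using the metrizability of weak convergence on $\cP(\cW)$ via the Prokhorov metric $d_\cP$, I would choose a strictly increasing sequence $N(k)$ such that $d_\cP(\Theta^{(k)}_n,\Theta^{(k)})<1/k$ whenever $n\geq N(k)$, and define $\Theta_n:=\Theta^{(k)}_n$ for $n\in[N(k),N(k+1))$, with $\Theta_n:=\pi_n\in G_\alpha(\pi_n)$ for $n<N(1)$. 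The triangle inequality $d_\cP(\Theta_n,\Theta)\leq 1/k+d_\cP(\Theta^{(k)},\Theta)$ then delivers $\Theta_n\to\Theta$ along the constructed sequence $\Theta_n\in G_\alpha(\pi_n)$.

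The main obstacle will be the simultaneous imposition of the pointwise ceiling $\eta_k\leq A$ and the unit-mass condition $\int\eta_k\,d\pi=1$ while keeping $\eta_k$ continuous and close to $\eta$ in $\cL^1(\pi)$. Lemma~\ref{TechnicalShiftLemma} is engineered precisely for this: when $c_k<1$, it supplies a vanishing upward shift that saturates at the cap $A$ rather than overshooting it, and the estimate $k_{c_k,\delta}\to 0$ as $c_k\to 1$ ensures that this adjustment does not spoil $\cL^1(\pi)$-convergence of the densities, which is what ultimately drives the weak convergence $\Theta^{(k)}\to\Theta$.
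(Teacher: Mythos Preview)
Your proposal is correct and follows essentially the same approach as the paper's proof: approximate $\eta=\tfrac{d\Theta}{d\pi}$ by bounded continuous functions, truncate to $[0,A]$, renormalize via division (when $c_k\geq 1$) or the shift of Lemma~\ref{TechnicalShiftLemma} (when $c_k<1$), apply Lemma~\ref{ContinuousRNLemma} to each approximant, and assemble via a diagonal argument. The only cosmetic difference is in the final step, where the paper selects $\Theta_n$ as the $d_\cP$-closest point of the compact set $G_\alpha(\pi_n)$ to $\Theta$ and then argues via subsequences, whereas you build the diagonal sequence directly; both are standard and equivalent.
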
 

\begin{proof} 
Consider a sequence $\{\pi_{n}\}_{n\in\mathbb{N}}$ with $\pi_{n}\to\pi$. We will show the existence of a sequence $\{\Theta_{n}\}_{n\in\mathbb{N}}$, with $\Theta_{n}\in G_{\alpha}(\pi_{n})$ such that $\Theta_{n} \to\Theta$.

\smallskip
Let $\xi = \tfrac{d\Theta}{d\pi}$. If $\xi$ has a continuous version, the result follows from Lemma~\ref{ContinuousRNLemma}. Otherwise, by~\citet[Section 37, Theorem 2]{KolmogorovFomin}, there exist continuous $\xi_{m}$, for each $m\in\mathbb{N}$, such that $\xi_{m} \to \xi$ in $\cL^{1}(\pi)$. We can clearly take $0\leq \xi_{m}\leq (1-\alpha)^{-1}$, for each $m \in\mathbb{N}$. Let $c_{m} := \int \xi_{m}\, d\pi$, for each $m \in\mathbb{N}$, and note that $c_{m} \to 1$.  

\smallskip
For a given $m \in\mathbb{N}$, if $c_{m} \geq 1$, define $\xi'_{m} := c_{m}^{-1}$, and we have $\|\xi'_{m}-\xi\|_{1} \leq (1-c_{m}^{-1})\|\xi_{m}\|_{1} + \|\xi_{m}-\xi\|_{1}$. If $c_{m} < 1$, apply Lemma~\ref{TechnicalShiftLemma} to obtain $r_{m} \leq k_{c_{m},\delta}$ (with $\delta := (1-\alpha)^{-1}-1$) such that $\xi'_{m} = \min((1-\alpha)^{-1},\xi_{m}+r_{m})$, with $\int \xi'_{m}\, d\pi =1$ and $\|\xi'_{m} - \xi\|_{1} \leq r_{m} + \|\xi_{m}-\xi\|_{1}$. Either way, $\xi'_{m}\to\xi$ in $\cL^{1}(\pi)$, and defining $\eta_{m}$ by $d\eta_{m} := \xi'_{m}d\pi$ gives $\eta_{m} \in G_{\alpha}(\pi), \eta_{m}\to\Theta$.

\smallskip
For each $n \in\mathbb{N}$, the set $G_{\alpha}(\pi_{n})$ is compact, and therefore there exists $\Theta_{n} \in G_{\alpha}(\pi_{n})$ such that 
\begin{equation*} 
d_{\cP}(\Theta_{n},\Theta) = \inf\Big\{ d_{\cP}(m,\Theta) :  m\in G_{\alpha}(\pi_{n})\Big\}.
\end{equation*} 

\smallskip\noindent
Let $\{\Theta_{n_{k}}\}_{k\in\mathbb{N}}$ be a subsequence of $\{\Theta_{n}\}_{n\in\mathbb{N}}$. By Lemma~\ref{ContinuousRNLemma}, for each $m \geq 1$, there exists a sequence $\{\Theta_{m,n_{k}}\}_{k\in\mathbb{N}}$ with $\Theta_{m,n_{k}} \in G_{\alpha}(\pi_{n_{k}})$ and $\Theta_{m,n_{k}} \to \eta_{m}$. For each $j\in\mathbb{N}$, there exists $M(j)>M(j-1)$ (with $M(0)=0$) such that $d_{\cP}(\eta_{m},\Theta) \leq \tfrac{1}{2j}$ for $m\geq M(j)$, and an $N(j)>N(j-1)$ (with $N(0)=0$) such that $d_{\cP}(\Theta_{M(j),n_{k}},\eta_{M(j)}) \leq \tfrac{1}{2j}$ for all $n_k\geq N(j)$. The sequence $\{\Theta_{M(j),N(j)}\}_{j\in\mathbb{N}}$ satisfies  $\Theta_{M(j),N(j)} \in G_{\alpha}(\pi_{N(j)})$, for each $j\in\mathbb{N}$, and $d_{\cP}(\Theta_{M(j),N(j)},\Theta) \leq d_{\cP}(\Theta_{M(j),N(j)},\eta_{M(j)}) + d_{\cP}(\eta_{M(j)},\Theta) \leq \tfrac{1}{j}$, so that $\Theta_{M(j),N(j)}\to \Theta$. But $\{N(j)\}_{j\in\mathbb{N}}$ is a subsequence of $\{n_{k}\}_{k\in\mathbb{N}}$ and by definition, we have 
\begin{equation*} 
d_{\cP}(\Theta_{N(j)},\Theta) \leq d_{\cP}(\Theta_{M(j),N(j)},\Theta) \leq \frac{1}{j}.
\end{equation*} 
Thus, every subsequence of $\{\Theta_{n}\}_{n\in\mathbb{N}}$ has a further subsequence that converges to $\Theta$, and so $\Theta_{n}\to\Theta$.
\end{proof}


\section{Asymptotic Distributions and Numerical Results}\label{AsymptoticDistributionSection}

In this section, we present an asymptotic result that is valid when both $N_{\cX}=|\cX|$ and $N_{\cY}=|\cY|$ are finite; and we discuss potential extensions to the case when $X$ and $Y$ do not have finite support.  Without loss of generality, we identify $\cX$ with $\{1,2,\ldots,N_{\cX}\}$ and $\cY$ with $\{1,2,\ldots,N_{\cY}\}$. The marginal probability distributions $\mu$ and $\nu$ can then be identified with probability vectors in $\bbR^{N_{\cX}}$ and $\bbR^{N_{\cY}}$, respectively, and we will do so throughout this section. 

\bigskip

For a given $\alpha\in (0,1)$, the primal maximum CVaR problem becomes the following linear program (see \citet{MemartoluieSaundersWirjanto}):
\begingroup
\allowdisplaybreaks
\begin{gather} 
\max_{\pi\in\mathbb{R}^{N_{\cX}}\times N_{\cY},\Theta\in\mathbb{R}^{N_{\cX}\times N_{\cY}}} \label{PrimalLP}
\sum_{i=1}^{N_{\cX}}\sum_{j=1}^{N_{\cY}} L_{ij}\Theta_{ij} \\
\sum_{j=1}^{N_{\cY}} \pi_{ij} = \mu_{i},\quad i=1,\ldots,N_{\cX} \nonumber \\
\sum_{i=1}^{N_{\cX}} \pi_{ij} = \nu_{j},\quad j=1,\ldots,N_{\cY} \nonumber \\
\Theta_{ij} \leq (1-\alpha)^{-1}\pi_{ij}, \quad i=1,\ldots,N_{\cX},\; j=1,\ldots,N_{\cY} \nonumber \\
\sum_{i=1}^{N_{\cX}}\sum_{j=1}^{N_{\cY}} \Theta_{ij} = 1 \nonumber \\
\pi_{ij},\Theta_{ij} \geq 0, \quad i=1,\ldots,N_{\cX},\; j=1,\ldots,N_{\cY} \nonumber 
\end{gather}
\endgroup

\noindent The feasible set is nonempty ($\Theta_{ij} = \pi_{ij} = p_{i}q_{j}$ is feasible), closed, and bounded; and the objective function is continuous. Therefore, the primal problem has a finite value and an optimal solution that attains that value.

\vspace{0.2cm}

The dual of the above linear program is
\begin{gather*} 
\min_{\varphi\in\mathbb{R}^{N_{\cX}},\psi\in\mathbb{R}^{N_{\cY}},\rho\in\mathbb{R}^{N_{\cX}\times N_{\cY}},\beta\in\mathbb{R}} \label{DualLP}
\sum_{i=1}^{N_{\cX}}\varphi_{i} \mu_{i} + \sum_{j=1}^{N_{\cY}} \psi_{j} \nu_{j} + \beta \\
(1-\alpha)(\varphi_{i} + \psi_{j}) - \rho_{ij} \geq 0, \quad i=1,\ldots,N_{\cX},\; j=1,\ldots,N_{\cY} \nonumber \\
\rho_{ij} + \beta \geq L_{ij}, \quad i=1,\ldots,N_{\cX},\; j=1,\ldots,N_{\cY} \nonumber \\
\rho_{ij} \geq 0, \quad i=1,\ldots,N_{\cX},\; j=1,\ldots,N_{\cY} \nonumber 
\end{gather*}

\bigskip

Note that for any feasible dual solution $(\varphi,\psi,\rho,\beta)$ and $c\in\mathbb{R}$, $(\varphi+c,\psi-c,\rho,\beta)$ is also feasible, with the same objective value. Consequently, we can add the constraint $\varphi_{1} = 0$, without affecting the optimal value of the dual. With this extra constraint, there are finitely many extreme points $(\varphi^{k},\psi^{k},\rho^{k},\beta^{k})$, $k=1,\ldots,N_{D}$ of the dual feasible polyhedron (notice that these do not depend on $\mu,\nu$). For probability vectors, $\mu\in \mathbb{R}^{N_{\cX}}, \nu\in \mathbb{R}^{N_{\cY}}$, we then have by linear programming duality: 
\begin{equation} 
V_{\alpha}(\mu,\nu) = \min_{k=1,\ldots,N_{D}} \varphi^{k}\cdot \mu + \psi^{k}\cdot \nu + \beta^{k} 
\label{WDDef}.
\end{equation}

\smallskip
If independent random samples of size $n$ are generated from $\mu$ and $\nu$, then the empirical distributions $\mu_{n}$ and $\nu_{n}$ are random probability vectors and the Central Limit Theorem (e.g., \citet[p.\ 16]{vanderVaart}) implies that
\begin{equation*} 
\sqrt{n}\left(\left(\begin{matrix} \mu_{n} \\ \nu_{n} \end{matrix}\right) - 
\left( \begin{matrix} \mu \\ \nu \end{matrix}\right)\right) \rightsquigarrow N(0,\Sigma),
\end{equation*} 
where $\rightsquigarrow$ denotes convergence in distribution, and the covariance matrix $\Sigma$ has the block form: 
\begin{equation*} 
\Sigma = \left(\begin{matrix} \Sigma_{\cX} & 0 \\ 0 & \Sigma_{\cY} \end{matrix}\right),
\end{equation*} 
with: 
\begin{equation} 
\Sigma_{\cX}(i,j) := \begin{cases} 
\mu_{i}(1-\mu_{i}) & i=j \\
-\mu_{i}\mu_{j} & i\ne j
\end{cases},
\qquad 
\Sigma_{\cY}(i,j) := \begin{cases} 
\nu_{i}(1-\nu_{i}) & i=j \\
-\nu_{i}\nu_{j} & i\ne j.
\end{cases} 
\label{CovarianceMatricesDef}
\end{equation} 

\smallskip
Then, arguing as in~\citet{SommerfeldMunk} (or directly verifying the conditions in Proposition 3.5 of~\citet{KlattMunkZemel}) we arrive at the following.

\smallskip
\begin{theorem}\label{LPCLT}
Suppose that $\alpha\in (0,1)$, $|\cX|=N_{\cX} < \infty$ and $|\cY| = N_{\cY} < \infty$. Let $K$ be the set of all $(\mu,\nu)$ with $\mu$ and $\nu$ probability vectors in $\mathbb{R}^{N_{\cX}}$ and $\mathbb{R}^{N_{\cY}}$ respectively. Then: 

\begin{enumerate}
\item $V_{\alpha}(\cdot,\cdot)$ is Hadamard directionally differentiable, tangentially to $K$, with derivative: 
\begin{equation} 
V'_{\alpha}(\mu,\nu;d_{\mu},d_{\nu}) = \min_{k\in I(\mu,\nu)} \varphi^{k}\cdot d_{\mu} + \psi^{k}\cdot d_{\nu},
\label{ValueFunctionDerivative}
\end{equation}
where $I(\mu,\nu)$ is the set of all $k$ which attain the minimum in~\eqref{WDDef}. 
\smallskip
\item Suppose that $\mu_{n},\nu_{n}$, $n=1,2,\ldots$ are the empirical measures from independent random samples from 
$\mu$ and $\nu$ respectively. Then: 
\begin{equation*} 
\sqrt{n}(V_{\alpha}(\mu_{n},\nu_{n})-V_{\alpha}(\mu,\nu)) \rightsquigarrow 
V'_{\alpha}(\mu,\nu;Z_{\cX},Z_{\cY}),
\end{equation*} 
where $Z_{\cX}$, $Z_{\cY}$ are independent centred normal random vectors with respective covariance matrices $\Sigma_{\cX}$ and $\Sigma_{\cY}$ as in~\eqref{CovarianceMatricesDef}.
\end{enumerate}
\end{theorem}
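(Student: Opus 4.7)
The plan is to exploit the representation~\eqref{WDDef}, which expresses $V_{\alpha}(\mu,\nu)$ as the pointwise minimum of finitely many affine functionals $f_k(\mu,\nu):=\varphi^k\cdot \mu + \psi^k\cdot \nu + \beta^k$, for $k=1,\ldots,N_D$, whose coefficients do \emph{not} depend on $(\mu,\nu)$. Part~(1) then follows from an elementary calculus of the min of finitely many affine functions, and part~(2) follows by applying the functional delta method for Hadamard directionally differentiable maps (e.g., Shapiro's extension of the delta method, as deployed in \citet{SommerfeldMunk}).

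For part~(1), fix $(\mu,\nu)\in K$ and a direction $d=(d_\mu,d_\nu)$ tangent to $K$, and take arbitrary sequences $t_n\downarrow 0$ and $d_n=(d_n^\mu,d_n^\nu)\to d$ with $(\mu,\nu)+t_n d_n\in K$. For any $k\notin I(\mu,\nu)$ one has $f_k(\mu,\nu) > V_\alpha(\mu,\nu)+\delta$ for some $\delta>0$, so by continuity of $f_k$ and the boundedness of $d_n$ these indices remain inactive at $(\mu,\nu)+t_n d_n$ for all $n$ large enough. Hence, since all active indices share the common value $V_\alpha(\mu,\nu)$ at $(\mu,\nu)$,
\begin{equation*}
V_\alpha((\mu,\nu)+t_n d_n) \;=\; \min_{k\in I(\mu,\nu)} f_k((\mu,\nu)+t_n d_n)
\;=\; V_\alpha(\mu,\nu) + t_n \min_{k\in I(\mu,\nu)} \bigl(\varphi^k\cdot d_n^\mu + \psi^k\cdot d_n^\nu\bigr).
\end{equation*}
Dividing by $t_n$ and letting $n\to\infty$, the continuity of the min of finitely many linear forms in $d_n$ yields the formula~\eqref{ValueFunctionDerivative}, which establishes Hadamard directional differentiability of $V_\alpha$ tangentially to $K$.

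For part~(2), the multinomial central limit theorem gives $\sqrt{n}((\mu_n,\nu_n)-(\mu,\nu)) \rightsquigarrow (Z_\cX,Z_\cY)$, where the limit lies almost surely in the linear subspace $\{d_\mu : \sum_i d_\mu(i)=0\}\times \{d_\nu : \sum_j d_\nu(j)=0\}$, which is contained in the tangent cone to $K$ at $(\mu,\nu)$. Since $(\mu_n,\nu_n)\in K$ and $V_\alpha$ is Hadamard directionally differentiable tangentially to $K$ at $(\mu,\nu)$ by part~(1), the delta method for Hadamard directionally differentiable maps yields
\begin{equation*}
\sqrt{n}\bigl(V_\alpha(\mu_n,\nu_n)-V_\alpha(\mu,\nu)\bigr) \;\rightsquigarrow\; V'_\alpha(\mu,\nu;Z_\cX,Z_\cY).
\end{equation*}

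The main technical point — and the obstacle that requires care — is verifying that one is entitled to apply the directional delta method \emph{tangentially} to $K$ rather than on all of $\bbR^{N_\cX}\times\bbR^{N_\cY}$: this reduces to checking that the empirical increments $\sqrt{n}((\mu_n,\nu_n)-(\mu,\nu))$ stay in (or converge into) the tangent cone. This is standard because $K$ is polyhedral and the limiting Gaussian is supported in the affine hull of $K$, which coincides with the relative tangent cone at any point of the relative interior of $K$; at boundary points (where some $\mu_i=0$ or $\nu_j=0$) the corresponding Gaussian coordinate is deterministically zero, and the argument still goes through. This is exactly the structural hypothesis verified in Proposition~3.5 of \citet{KlattMunkZemel}, which we would invoke to close the proof.
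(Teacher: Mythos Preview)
Your proposal is correct and follows essentially the same route as the paper, which simply appeals to the argument of \citet{SommerfeldMunk} and the verification in Proposition~3.5 of \citet{KlattMunkZemel}. You supply the explicit elementary computation for the Hadamard directional derivative of a finite minimum of affine functions and spell out the tangent-cone/boundary check that the paper leaves implicit, but the overall strategy---representation~\eqref{WDDef}, directional differentiability of a finite min, then the directional delta method applied to the multinomial CLT---is identical.
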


\smallskip
The limiting distribution in the previous theorem will be Gaussian when~\eqref{ValueFunctionDerivative} is linear in $(d_{\mu},d_{\nu})$. This will be true when the optimal solution to the dual problem (with the additional constraint $\varphi_{1} = 0$) is unique. With multiple optimal dual solutions, the limiting distribution will be non-Gaussian. This is directly analogous to the results for the optimal transport problem with $|\cX|$ and $|\cY|$ finite in~\citet{SommerfeldMunk}. For the quadratic cost function with $\cX$ and $\cY$ subsets of $\mathbb{R}^{n}$, and under technical conditions on the measure $\nu$, \citet{DelBarrioLoubes} show that there is a unique dual solution to the optimal transport problem, and that a central limit theorem for the value function holds. It is an interesting direction for future work to attempt to generalize this result to other cost functions, and to our risk measure bounding problem. 

\bigskip

\begin{example}[Linear Loss with Gaussian Marginals]
As a first example, consider the loss $L(X,Y) = X+Y$, with $X\sim N(0,1)$ and $Y\sim N(0,1)$. The optimal coupling is $X=Y$, so that $L$ is Gaussian with  mean 0 and standard deviation 2. For generated random samples, the MES will again be attained by a comonotonic coupling. The ES of the sampled problem will then simply be the sum of the estimated ESs from the samples from $X$ and $Y$. Known results on the estimation of risk measures (e.g., \citet{ManistreHancock}) imply that the limit distribution will be Gaussian. Figures~\ref{NormalHistogram} and~\ref{NormalQQPlot} show the results of a computational experiment, in which a sample of size 200 was generated from $X$, a sample of size 400 was simulated from $Y$, and the MES with $\alpha = 0.9$ was estimated by solving the linear program~\eqref{PrimalLP}. This experiment was repeated 1000 times, and the histogram of the optimal values is presented in Figure~\ref{NormalHistogram}. The q-q plot against a fitted normal is displayed in Figure~\ref{NormalQQPlot}. An Anderson-Darling test of normality was not able to reject the null hypothesis of a normal distribution for the simulated optimal values at the 95\% confidence level.

\begin{figure}[h!]
\begin{center}
\subfloat[\footnotesize{Histogram of the optimal value. The red line indicates a fit to a normal distribution.\vspace{-0.4cm}}]{\label{NormalHistogram}
\includegraphics[width=8.6cm, trim={0.5cm 0.1cm 0.5cm 0cm}]{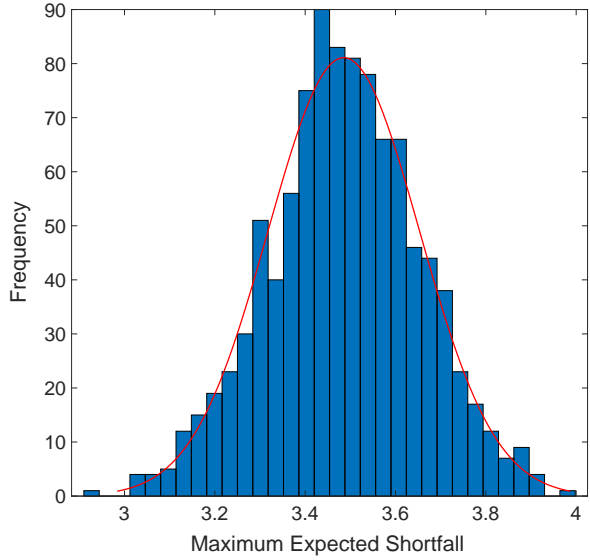}
}
\subfloat[\footnotesize{Quantiles of the simulated optimal value.}]{\label{NormalQQPlot}
\includegraphics[width=8.6cm, trim={1.5cm 0cm 0.5cm 0cm}]{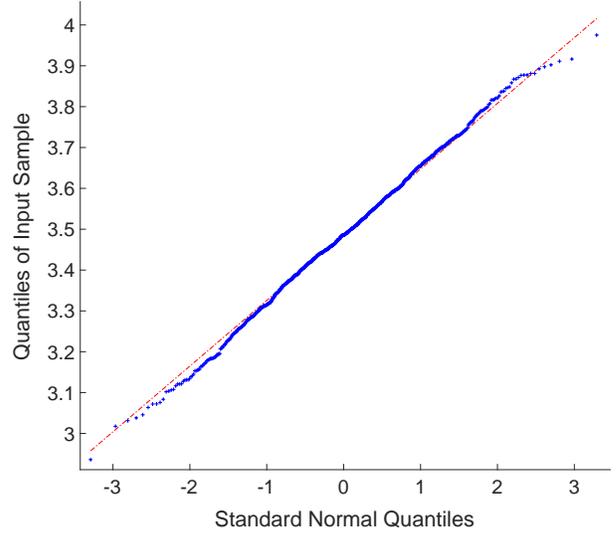}
}
\end{center}
\caption{{\footnotesize{Histogram of the optimal value of the MES (at $\alpha = 0.9$), with $X$ and $Y$ simulated from the standard Gaussian distribution, and losses given by $L = X+Y$; as well as quantiles of the simulated optimal value of the MES (at $\alpha = 0.9$), with $X$ and $Y$ simulated from the standard Gaussian distribution, and losses given by $L = X+Y$.\vspace{0.8cm}}}}
\label{fig:1}
\end{figure}
\end{example}

\smallskip
\begin{example}[Counterparty Credit Risk]
A loss function related to counterparty credit risk (see, e.g.~\citet{MemartoluieSaundersWirjanto}) would be the following: 
\begin{equation*} 
L(X,Y) = \max(Y_{1},0)\cdot \Phi\left(\frac{\Phi^{-1}(PD_{1}) - \sqrt{\rho_{1}}X}{\sqrt{1-\rho_{1}}}\right)
+ \max(Y_{2},0)\cdot \Phi\left(\frac{\Phi^{-1}(PD_{2}) - \sqrt{\rho_{2}}X}{\sqrt{1-\rho_{2}}}\right),
\end{equation*} 
where $\Phi$ is the standard normal cumulative distribution function. This gives the systematic credit losses in the Vasicek model, with systematic credit factor $X\sim N(0,1)$, for a portfolio consisting of two counterparties  with probabilities of default $PD_{1}$ and $PD_{2}$, systematic credit factor loadings $\rho_{1}$ and $\rho_{2}$, and counterparty portfolio values $Y_{1}$ and $Y_{2}$, where we assume for simplicity that
\begin{equation*} 
\left(\begin{matrix}
Y_{1} \\
Y_{2} \end{matrix}\right) \sim N\left(\left(\begin{matrix} \mu_{1}\\ \mu_{2}\end{matrix}\right), 
\left(\begin{matrix} \sigma_{1}^{2} & r\sigma_{1}\sigma_{2} \\ r \sigma_{1}\sigma_{2} & \sigma_{2}^{2}\end{matrix}\right)\right).
\end{equation*} 
We simulated 500 values from each of $X$ and $Y$, and then solved problem~\eqref{PrimalLP} for the MES at $\alpha = 0.9$, with the model parameters $PD_{1} = PD_{2} = 0.02$, $\rho_{1} = \rho_{2} = 0.2$, $r=0.5$, $\mu_{1} = 100$, $\mu_{2} = -100$, $\sigma_{1} = \sigma_{2} = 100$. The histogram of 1000 realized optimal values is given in Figure~\ref{GEVHistogram}. The distribution does not appear to be normally distributed, and indeed we provide on the figure the fitted Generalized Extreme Value (GEV) distribution, and the corresponding qq-plot in Figure~\ref{GEVQQPlot}. It should be noted that Theorem~\ref{LPCLT} does not apply, as we are simulated from continuous 
random variables.\footnote{Nonetheless, there is some evidence that similar properties to those identified in the Theorem may be at play here. Inspecting the optimal dual variables in the simulated linear programs showed them often to be very different in different simulation runs, which may be evidence that the optimal dual solution of the limiting problem is not unique.}

\begin{figure}[h!]
\begin{center}
\subfloat[\footnotesize{Histogram of the optimal value. The red line indicates a fit to a GEV distribution.\vspace{-0.4cm}}]{\label{GEVHistogram}
\includegraphics[width=8.5cm, trim={1cm 0.1cm 1cm 1cm}]{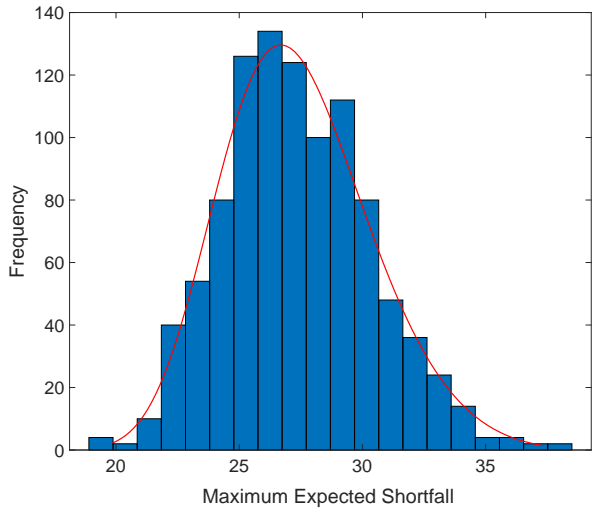}
}
\subfloat[\footnotesize{Quantiles of the simulated optimal value.}]{\label{GEVQQPlot}
\includegraphics[width=8.5cm, trim={2.5cm 0cm 1.5cm 0cm}]{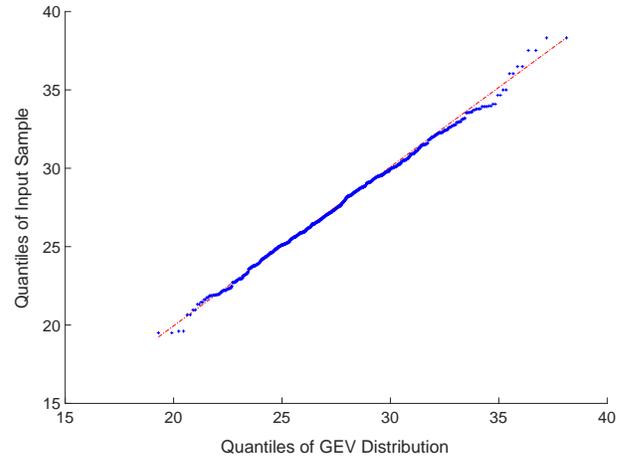}
}
\end{center}
\caption{{\footnotesize{Histogram of the optimal value of the MES ($\alpha = 0.9$), for the 
counterparty credit risk loss distribution, and quantiles of the simulated optimal value of the MES ($\alpha = 0.9$), for the 
counterparty credit risk loss distribution.}}}
\label{fig:2}
\end{figure}
\end{example}

\smallskip

\section{Conclusions and Future Directions} 
\label{Conclusion}

In this paper, we study the problem of bounding a spectral risk measure applied to a loss function $L(X,Y)$ when the marginal distributions of the factors $X,Y$ are known, but their joint distribution is unknown. Basic properties of the optimization problem, as well as an analogue of the Kantorovich (strong) duality from optimal transport were derived. Further, we studied continuity properties of the objective value and set of maximizers with respect to perturbation of the marginal constraints, as well as the distribution of the optimal value when the factor marginal distributions are simulated from finite probability spaces.

\bigskip

There are a number of possible directions for continuation of the research in this paper, including the following: 

\begin{itemize} 
\item Bounds based solely on knowledge of marginal distributions are known to often produce extreme values in practice. Following the work of \citet{GlassermanYang} on CVA, functions that penalized a measure of ``unfitness" of $\pi$ could be introduced into the primal problem.
\smallskip
\item The distribution and convergence properties of the optimal value function could be investigated under more general assumptions on the underlying spaces $\cX$, $\cY$ than the rather restrictive assumptions imposed in Section~\ref{AsymptoticDistributionSection}. Given that this problem is not entirely resolved even in the case of optimal transport, this is likely to be challenging.
\smallskip
\item Numerical methods for the resulting optimization problems could be studied. In the work~\citet{MemartoluieSaundersWirjanto} that motivated this study, the market and credit factors were simulated independently from their marginal distributions, as in Section~\ref{AsymptoticDistributionSection}. Is there a better way to simulate the factors from their marginal distributions to yield more efficient numerical procedures? Can importance sampling algorithms be developed to yield better estimates of risk bounds given finite computational resources? Can the special structure of the optimization problem be exploited to yield more efficient numerical algorithms, given that the linear programs that must be solved can become very large?
\end{itemize}

\smallskip 
 
\noindent We leave these important questions for future research.

\newpage

\bibliography{References}

\end{document}